\documentclass[journal,10pt,draftclsnofoot,onecolumn,letter]{IEEEtran}
\ifCLASSINFOpdf
\else
\fi
%

\usepackage[margin=1.1in]{geometry}

\usepackage[cmex10]{amsmath}
\usepackage{array}
\usepackage{mdwmath}
\usepackage{mdwtab}
\usepackage{eqparbox}
\usepackage[tight,footnotesize]{subfigure}
\usepackage{caption}
\usepackage[caption=false,font=footnotesize]{subfig}
\usepackage{fixltx2e}
\usepackage{stfloats}
\usepackage{url}
\usepackage{cite}
\usepackage{graphics,graphicx, latexsym, amssymb, amscd, psfrag}

\addtolength{\topmargin}{0.25in}
\addtolength{\textheight}{-0.2in}



\newcommand{\cA}{{\mathcal{A}}}

\newcommand{\cC}{{\mathcal{C}}}


\newcommand{\cE}{{\mathcal{E}}}


\newcommand{\bh}{{\mathbf{h}}}

\newcommand{\cJ}{{\mathcal{J}}}

\newcommand{\cM}{{\mathcal{M}}}

\newcommand{\CN}{{\mathcal{CN}}}

\newcommand{\cU}{{\mathcal{U}}}

\newcommand{\cX}{{\mathcal{X}}}



\newcommand{\eps}{\varepsilon}





\usepackage{verbatim} 
 \usepackage{amsxtra}  
%

%


\newcounter{actr}
{\begin{list}{(\alph{actr})}{\usecounter{actr}}}{\end{list}}

\newcounter{ictr}
{\begin{list}{(\roman{ictr})}{\usecounter{ictr}}}{\end{list}}

\newtheorem{remark}{Remark}
\newtheorem{thm}{Theorem}
\newtheorem{lemma}{Lemma}

\newtheorem{corol}{Corollary}

\newenvironment{new-proof}[1]
{{\em Proof  #1: }}%
{ \noindent\qed }
%



\newcommand{\qed}{\rule[0.1ex]{1.4ex}{1.6ex}}

\hyphenation{or-tho-nor-mal}
\hyphenation{wave-let wave-lets}







\DeclareMathAlphabet{\mathbsf}{OT1}{cmss}{bx}{n}
\DeclareMathAlphabet{\mathssf}{OT1}{cmss}{m}{sl}

\DeclareSymbolFont{bsfletters}{OT1}{cmss}{bx}{n}  
\DeclareSymbolFont{ssfletters}{OT1}{cmss}{m}{n}
\DeclareMathSymbol{\bsfGamma}{0}{bsfletters}{'000}
\DeclareMathSymbol{\ssfGamma}{0}{ssfletters}{'000}
\DeclareMathSymbol{\bsfDelta}{0}{bsfletters}{'001}
\DeclareMathSymbol{\ssfDelta}{0}{ssfletters}{'001}
\DeclareMathSymbol{\bsfTheta}{0}{bsfletters}{'002}
\DeclareMathSymbol{\ssfTheta}{0}{ssfletters}{'002}
\DeclareMathSymbol{\bsfLambda}{0}{bsfletters}{'003}
\DeclareMathSymbol{\ssfLambda}{0}{ssfletters}{'003}
\DeclareMathSymbol{\bsfXi}{0}{bsfletters}{'004}
\DeclareMathSymbol{\ssfXi}{0}{ssfletters}{'004}
\DeclareMathSymbol{\bsfPi}{0}{bsfletters}{'005}
\DeclareMathSymbol{\ssfPi}{0}{ssfletters}{'005}
\DeclareMathSymbol{\bsfSigma}{0}{bsfletters}{'006}
\DeclareMathSymbol{\ssfSigma}{0}{ssfletters}{'006}
\DeclareMathSymbol{\bsfUpsilon}{0}{bsfletters}{'007}
\DeclareMathSymbol{\ssfUpsilon}{0}{ssfletters}{'007}
\DeclareMathSymbol{\bsfPhi}{0}{bsfletters}{'010}
\DeclareMathSymbol{\ssfPhi}{0}{ssfletters}{'010}
\DeclareMathSymbol{\bsfPsi}{0}{bsfletters}{'011}
\DeclareMathSymbol{\ssfPsi}{0}{ssfletters}{'011}
\DeclareMathSymbol{\bsfOmega}{0}{bsfletters}{'012}
\DeclareMathSymbol{\ssfOmega}{0}{ssfletters}{'012}















\newcommand{\rvg}{{\mathssf{g}}}	

\newcommand{\rvh}{{\mathssf{h}}}	

\newcommand{\rvbh}{{\mathbsf{h}}}


\newcommand{\rvl}{{\mathssf{l}}}	
\newcommand{\rvm}{{\mathssf{m}}}	

\newcommand{\rvn}{{\mathssf{n}}}	
\newcommand{\rvbn}{{\mathbsf{n}}}

\newcommand{\rvq}{{\mathssf{q}}}	


\newcommand{\rvs}{{\mathssf{s}}}	
\newcommand{\rvbs}{{\mathbsf{s}}}


\newcommand{\rvu}{{\mathssf{u}}}	

\newcommand{\rvv}{{\mathssf{v}}}	

\newcommand{\rvw}{{\mathssf{w}}}	

\newcommand{\rvbw}{{\mathbsf{w}}}

\newcommand{\rvx}{{\mathssf{x}}}	


\newcommand{\rvbx}{{\mathbsf{x}}}

\newcommand{\rvy}{{\mathssf{y}}}	

\newcommand{\rvby}{{\mathbsf{y}}}

\newcommand{\rvz}{{\mathssf{z}}}	%

\newcommand{\rvbz}{{\mathbsf{z}}}






\newcommand{\bbz}{\bar{\rvbz}}
\newcommand{\bbZ}{\bar{\mathsf{\mathbf{Z}}}}
\newcommand{\bsi}{\backslash{i}}
\newcommand{\bby}{\bar{\rvby}}

\newcommand{\rvsu}{\bar{\rvs}}
\newcommand{\rvmu}{\bar{\rvm}}

\begin{document}

\title{Private Broadcasting over Independent Parallel Channels }

\author{Ashish~Khisti~\IEEEmembership{Member,~IEEE,} and Tie Liu~\IEEEmembership{Member,~IEEE}
\thanks{A.~Khisti is with the Department
of Electrical and Computer Engineering, University of Toronto,
Toronto, ON M5S 3G4, Canada (e-mail: akhisti@comm.utoronto.ca).
T.~Liu is with the Department
of Electrical and Computer Engineering, Texas A\&M University, College Station, TX 77843, USA (e-mail: tieliu@tamu.edu).}
\thanks{{A. Khisti's work was supported by an NSERC (Natural Sciences Engineering Research Council) Discovery Grant and an Ontario Early Researcher Award. 
T. Liu was supported by the National Science Foundation under Grants
CCF-08-45848 and CCF-09-16867.}
}\thanks{Part of this work was presented in the IEEE International Symposium on Information Theory (ISIT),
Cambridge, MA, July, 2012}}

\maketitle

\begin{abstract}
We study private broadcasting of two messages to two groups of receivers over independent parallel channels. 
One group consists of an arbitrary number of receivers interested in a common message, whereas the other group has only one receiver. 
Each message must be kept confidential from the receiver(s) in the other group.
Each of the sub-channels is degraded, but the order of receivers on each channel can be different.  
While corner points of the capacity region were characterized in earlier works,
we establish the capacity region and show the optimality of a superposition strategy. 
For the case of parallel Gaussian channels, we show that a Gaussian input distribution is optimal.
We also discuss an extension of our setup to  broadcasting over a block-fading channel
and demonstrate significant performance gains  using the proposed scheme
over a baseline time-sharing scheme.
\end{abstract}

\IEEEpeerreviewmaketitle

\section{Introduction}
There has been a considerable amount of interest in recent years in exploiting the properties of fading  wireless channels for transmission of confidential messages (see e.g.,~\cite{gopalaLaiElGamal:06Secrecy, yates, liang, ashishBC, barros,Liang:09Compound} and references therein). Such studies have lead to new coding techniques such as 
the variable rate extension of the wiretap codebook~\cite{gopalaLaiElGamal:06Secrecy}, secure product codebooks~\cite{LiuPrabhakaran:08} and
secure multicast codebooks~\cite{ashishBC}. In the present work we study a setup where a single transmitter needs to serve two groups of receivers over a block-fading  channel. There are $K$ receivers in group $1$, all interested in a common message, whereas there is  a single receiver in group $2$. The message of group $1$ must be kept confidential from the group $2$ receiver, whereas the message of group $2$ must be kept confidential from group $1$.  We will refer to this setup as {\em{private broadcasting}}.  In related work, references~\cite{caiLam00, Liu09, LiuMaricSpasojevicYates:07} study private broadcasting when there is one receiver in each group. 
References~\cite{diggavi:12,yang:11} study private broadcasting with feedback over erasure and MIMO broadcast channels.
Reference~\cite{ashishCompound} studies interference alignment techniques for private broadcasting. In this paper we focus on the case when there are $M$ independent, parallel and degraded sub-channels and thereafter treat the natural extension to block-fading channels.   

Our setup reduces to previously known results at the corner points of the capacity region. When we only need to transmit the message for group $1$, with the group $2$ receiver as the only eavesdropper, the capacity can be achieved using a secure multicast codebook~\cite{ashishBC}. Instead, when we only need to transmit the message for group $2,$ with all receivers in group $1$ as eavesdroppers, the capacity can be achieved using a secure product codebook \cite{LiuPrabhakaran:08}. Interestingly the secure multicast  and  secure product  codebook constructions are based on  different ideas. A secure multicast codebook consists of $M$ sub-codebooks, one for each channel. Each sub-codebook is a wiretap codebook~\cite{csiszarKorner:78},  has the same rate as the transmitted message and guarantees confidentiality of the message from the eavesdropper on its respective link. The secure multicast construction  guarantees that the legitimate receiver can decode the message by using the output of all the channels. Furthermore the message remains confidential from the eavesdropper even when all the channel outputs are combined. While the secure product codebook also uses one
sub-codebook for each sub-channel, the rate of each sub-codebook equals the capacity of the legitimate receiver on that sub-channel. The secure product codebook takes a cartesian product of these codebooks and then applies the wiretap construction to this product codebook.
This guarantees that the output codeword on any given sub-channel is (nearly) independent of the output codewords on other sub-channels. This limits the amount of information that gets leaked to an eavesdropper on any given sub-channel. Both the secure multicast codebook and secure product codebook result in a higher rate than a vector extension of  the wiretap codebook to parallel channels.

In this paper we  study the case when both the messages need to be simultaneously transmitted. We find that a superposition construction achieves the entire capacity region. The proposed construction imposes a particular layering order for the secure multicast and secure product codebooks. The codewords in each sub-codebook of the secure product codebook must constitute the cloud centers, whereas the codewords in the associated sub-codebook of the secure multicast codebook must constitute satellite codewords. The optimality of such a layered coding scheme was somewhat unexpected. 
In absence of secrecy constraints, to the best of our knowledge the capacity region in the proposed setup remains open, even though the corner points are known~\cite{elGamal:80}.  We will provide an explanation on the sufficiency of the superposition approach after presenting the coding scheme in section~\ref{sec:coding}. 

For the case of independent Gaussian sub-channels, we further establish that a Gaussian input distribution is optimal. The proof involves obtaining a Lagrangian dual for every boundary point  of the capacity region and then using an extremal inequality~\cite{costa,liu-vector} to show that the expression is maximized using Gaussian inputs. The result for the Gaussian channels are extended to a block-fading channel model using suitable quantization of the channel gains. We numerically evaluate the rate region for a sub-optimal power allocation and observe significant gains over a naive time-sharing approach.

\section{Problem Statement and Main Results}

\subsection{Independent Parallel Channels}

Our setup involves $M$ independent parallel sub-channels and two groups  of receivers. There are $K$ receivers in group $1$ and one  receiver in group $2$. The output symbols at receiver $k$ in group $1$ across the $M$ sub-channels is denoted by
\begin{align}
\rvby_k  = (\rvy_{k,1}, \rvy_{k,2},\ldots, \rvy_{k,M}), \quad k=1,2\ldots, K,
\end{align}
whereas the output symbols of the group $2$ receiver across the $M$ sub-channels are denoted by 
\begin{align}
\rvbz = (\rvz_1,\rvz_2,\ldots, \rvz_M),
\end{align} 
and the channel input symbols are denoted by $\rvbx = (\rvx_1,\ldots, \rvx_M)$.  
 
Each sub-channel is a degraded broadcast channel. The degradation on sub-channel $i$ can be expressed as
\begin{align}
\rvx_i \rightarrow \rvy_{\pi_i(1),i} \cdots \rvy_{\pi_i(l_i),i} \rightarrow \rvz_i \rightarrow \rvy_{\pi_i(l_i+1),i} \cdots \rvy_{\pi_i(K),i},
\label{eq:degOrd} 
\end{align}
for some permutation $\{\pi_i(1),\ldots, \pi_i(K)\}$ of the set $\{1,\ldots, K\}$. 

We intend to transmit message $\rvm_1$ to receivers $1,\ldots, K$  in group $1,$ while the message $\rvm_2$ must be transmitted to the receiver in group $2$. A length-$n$ private broadcast code encodes a message pair $(\rvm_1, \rvm_2) \in [1,2^{nR_1}] \times [1,2^{nR_2}]$  into a sequence $\rvbx^n$ such that  ${\Pr(\rvm_1 \neq \hat{\rvm}_{1,k}) \le \eps_n},$ and ${\Pr(\rvm_2 \neq
\hat{\rvm}_2) \le \eps_n},$ and furthermore the secrecy constraints
\begin{align}
\frac{1}{n} I(\rvm_1; \rvbz^n) \le \eps_n, \quad \frac{1}{n} I(\rvm_2;\rvby_{k}^n) \le \eps_n, \quad k=1,2,\ldots, K,
\label{eq:secrecy}\end{align} 
are also satisfied. Here $\{\eps_n\}$  approaches zero as $n\rightarrow\infty$. The capacity region consists of the set of all  rate pairs $(R_1, R_2)$ achieved by some private broadcast code. The following Theorem characterizes this region.

\begin{thm}
\label{thm:DMC}
Let auxiliary variables $\{\rvu_i\}_{1\le i \le M}$  satisfy the  Markov condition
\begin{align}
\rvu_i \rightarrow \rvx_i \rightarrow \rvy_{\pi_i(1),i} \cdots \rvy_{\pi_i(l_i),i} \rightarrow \rvz_i \rightarrow \rvy_{\pi_i(l_i+1),i} \cdots \rvy_{\pi_i(K),i}.
\label{eq:r1}
\end{align}
The capacity region is given by the union of all rate pairs  $(R_1,R_2)$ that satisfy the following constraints:
\begin{align}
R_1 &\le \min_{1\le k \le K}\left\{\sum_{i=1}^M I(\rvx_i; \rvy_{k,i}| \rvu_i, \rvz_i)\right\}\label{eq:R1e}\\
R_2 &\le \min_{1\le k \le K}\left\{\sum_{i=1}^M I(\rvu_i; \rvz_i | \rvy_{k,i})\right\}\label{eq:R2e}
\end{align}
for some choice of $\{\rvu_i\}_{1\le i \le M}$ that satisfy~\eqref{eq:r1}. The alphabet of $\rvu_i$ satisfies the cardinality constraint $|\cU_i| \le |\cX_i| + 2K-1$. \hfill$\Box$
\end{thm}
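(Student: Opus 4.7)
My plan for achievability is to build a two-layer superposition codebook in which, for each sub-channel $i$, the cloud center $\rvu_i^n$ carries $\rvm_2$ and the satellite $\rvx_i^n$ carries $\rvm_1$. The cloud layer follows the secure product construction of~\cite{LiuPrabhakaran:08}: on each sub-channel, generate i.i.d.\ $\rvu_i^n$ at sub-codebook rate $I(\rvu_i;\rvz_i)$ so that the group-2 receiver decodes $\rvu_i^n$ sub-channel-by-sub-channel; index the resulting Cartesian product by $\rvm_2$ plus a binning variable with per-sub-channel rate $\min\{I(\rvu_i;\rvy_{k,i}),I(\rvu_i;\rvz_i)\}$, giving a $\rvm_2$ rate of $\sum_i[I(\rvu_i;\rvz_i)-I(\rvu_i;\rvy_{k,i})]^+$ which under~\eqref{eq:r1} equals $\sum_i I(\rvu_i;\rvz_i\mid\rvy_{k,i})$. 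Superimposed on this cloud, the satellite layer follows the secure multicast construction of~\cite{ashishBC}: per sub-channel, a wiretap codebook against $\rvz_i$ at sub-codebook rate $I(\rvx_i;\rvy_{k,i}\mid\rvu_i)$ and binning rate $I(\rvx_i;\rvz_i\mid\rvu_i)$, so $\rvm_1$ at rate $\sum_i I(\rvx_i;\rvy_{k,i}\mid\rvu_i,\rvz_i)$ is decodable by each group-1 receiver and confidential from $\rvz_i$ on each sub-channel (hence from $\rvbz^n$ jointly). Standard joint-typicality decoding and equivocation analysis complete the argument.

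For the converse, the plan is to combine Fano's inequality with the secrecy constraints~\eqref{eq:secrecy} to obtain, for each $k$,
\begin{align}
nR_1 \le I(\rvm_1;\rvby_k^n\mid\rvm_2,\rvbz^n)+o(n), \qquad
nR_2 \le I(\rvm_2;\rvbz^n\mid\rvby_k^n)+o(n),
\end{align}
using $I(\rvm_1;\rvbz^n)=o(n)$, $I(\rvm_2;\rvby_k^n)=o(n)$, $H(\rvm_2\mid\rvbz^n)=o(n)$, and $\rvm_1\perp\rvm_2$. Sub-channel independence given $\rvbx^n$ reduces the first expression to $\sum_i I(\rvx_i^n;\rvy_{k,i}^n\mid\rvm_2,\rvz_i^n)$. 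Single-letterization uses the $k$-free identification
\begin{align}
\rvu_{i,t}=\bigl(\rvm_2,\,\rvz_i^{t-1},\,\rvz_{i,t+1}^n\bigr),
\end{align}
together with a uniform time-sharing variable $Q$; setting $\rvu_i=(\rvu_{i,Q},Q)$ and $(\rvx_i,\rvy_{k,i},\rvz_i)=(\rvx_{i,Q},\rvy_{k,i,Q},\rvz_{i,Q})$, the $R_1$ bound collapses via the data-processing inequality applied under $(\rvm_1,\text{side info})\to\rvx_{i,t}\to\rvy_{k,i,t}$ given $(\rvu_{i,t},\rvz_{i,t})=(\rvm_2,\rvz_i^n)$. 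The $R_2$ bound requires a chain-rule decomposition of $I(\rvm_2;\rvbz^n\mid\rvby_k^n)$ across sub-channels and a Csiszar sum identity to swap $\rvby_k^n$ components between the conditioning and the first slot, after which the same identification collapses the expression to $\sum_i I(\rvu_i;\rvz_i\mid\rvy_{k,i})$. The Markov chain~\eqref{eq:r1} is verified by noting that $(\rvy_{k,i,t},\rvz_{i,t})$ depend only on $\rvx_{i,t}$ and the sub-channel-$i$-time-$t$ noise, which is independent of the other ingredients of $\rvu_{i,t}$. The cardinality bound $|\cU_i|\le|\cX_i|+2K-1$ follows from the Fenchel--Eggleston--Caratheodory theorem applied to preserve $|\cX_i|-1$ input-marginal dimensions together with the $2K$ mutual-information quantities in~\eqref{eq:R1e}--\eqref{eq:R2e}.

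The hard part is arranging a single auxiliary identification that simultaneously tightens both bounds for every $k$, given that the degradation order of the receivers varies across sub-channels. A natural per-$k$ choice would incorporate future-time segments of $\rvy_{k,i}$, introducing unwanted $k$-dependence. The proposed identification avoids this by using only $\rvm_2$ and the group-2 receiver's sub-channel-$i$ output; the resulting bound is tight because the Markov chain~\eqref{eq:r1} forces $I(\rvu_i;\rvz_i\mid\rvy_{k,i})=0$ on sub-channels where $k$ lies above $\rvz_i$ and $I(\rvx_i;\rvy_{k,i}\mid\rvu_i,\rvz_i)=0$ on sub-channels where $k$ lies below $\rvz_i$, so the two sums decouple sub-channel-by-sub-channel and respect the $\min_k$ structure of the capacity region.
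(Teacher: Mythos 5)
Your achievability argument is essentially the paper's: a superposition code with the secure product codebook for $\rvm_2$ as cloud centers and the secure multicast codebook for $\rvm_1$ as satellites, with the same rate accounting. Two points are glossed over but are repairable: ``confidential from $\rvz_i$ on each sub-channel (hence from $\rvbz^n$ jointly)'' is not automatic --- it requires showing that $\rvx_1^n,\ldots,\rvx_M^n$ are conditionally independent given $\rvm_1$ (the paper's Lemma~1, which relies on the uniform binning of the product set), and the equivocation of $\rvm_2$ at receiver $k$ needs the satellite layer's randomization rate $I(\rvx_i;\rvz_i|\rvu_i)$ to exceed $I(\rvx_i;\rvy_{k,i}|\rvu_i)$ on the sub-channels where $\rvz_i$ is stronger, so that the satellite codeword does not reveal the cloud center to $\rvy_{k,i}$.

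The converse, however, has a genuine gap: the auxiliary identification $\rvu_{i,t}=(\rvm_2,\rvz_i^{t-1},\rvz_{i,t+1}^n)$ is too small to support the $R_2$ bound. When you expand $I(\rvm_2;\rvbz^n\mid\rvby_k^n)$ by the chain rule over sub-channels and time, the $(i,t)$ term is conditioned on the group-$2$ outputs of the \emph{other} sub-channels and on receiver $k$'s outputs at \emph{other} times and sub-channels. Moving these conditioning variables into the first slot of the mutual information is the only inequality available in the upper-bounding direction, and it forces them into the auxiliary variable; you cannot subsequently drop them, since $I(A,D;\rvz_{i,t}\mid \rvy_{k,i,t})\ge I(A;\rvz_{i,t}\mid\rvy_{k,i,t})$ goes the wrong way, and there is no Markov chain making the dropped terms conditionally independent of $\rvz_{i,t}$ given $(\rvm_2,\rvz_i^{t-1},\rvz_{i,t+1}^n,\rvy_{k,i,t})$ --- the other sub-channels' outputs reveal information about the transmitted codeword and hence about $\rvx_{i,t}$ and $\rvz_{i,t}$. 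The residual terms are moreover $k$-dependent, which is exactly the difficulty you flag but do not resolve. The paper's fix is twofold: (i) first replace $\rvby_k^n$ by the degraded vector $\bby_k^n$ whose $i$-th component is $\rvz_i^n$ whenever receiver $k$ is stronger than the group-$2$ receiver on sub-channel $i$; and (ii) take the auxiliary to be $\rvu_i(j)=\{\rvm_2,\bbZ_{\bsi}^n,\bbz_{i,j+1}^n,\bbz_i^{j-1}\}$, where $\bbz_i^n$ collects $\rvz_i^n$ \emph{together with} the outputs of all group-$1$ receivers degraded below $\rvz_i$ on sub-channel $i$. This enlarged, $k$-independent variable absorbs every $k$-dependent conditioning term for every $k$ simultaneously, still satisfies the Markov chain \eqref{eq:r1}, and does not weaken the $R_1$ bound because the extra components are degraded versions of $\rvbz^n$. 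Without some such enlargement your single-letterization of the $R_2$ bound does not go through.
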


The coding theorem and converse for Theorem~\ref{thm:DMC} are presented in section~\ref{sec:coding} and~\ref{sec:conv} respectively.

\subsection{Gaussian Channels}
Consider the discrete-time real Gaussian model where the channel output over sub-channel $i$ at time index $t$ is given by
\begin{align}
\rvy_{k,i}(t) &= \rvx_{i}(t) + \rvn_{k,i}(t)\\
\rvz_{i}(t) &= \rvx_{i}(t) + \rvw_{i}(t), \quad t=1,\ldots,T.\label{eq:Gau}
\end{align}
The additive noise vectors $\rvbn_{k,i}=(\rvn_{k,i}(1),\cdots,\rvn_{k,i}(T))$ and $\rvbw_i=(\rvw_{i}(1),\cdots,\rvw_{i}(T))$ have entries that are sampled  i.i.d.\ $\mathcal{N}(0,\sigma_{k,i}^2)$ and $\mathcal{N}(0,\delta_i^2)$, respectively. Since the capacity region of the channel depends on the joint distribution of the additive noise $(\rvn_{1,i}(t),\ldots,\rvn_{K,i}(t),\rvw_{i}(t))$ only through the marginals and that Gaussian variables are infinitely divisible, without loss of generality we may assume that for each sub-channel $i$ the receivers are degraded as expressed in \eqref{eq:degOrd}. We shall consider both the per sub-channel average power constraint 
\begin{align}
\frac{1}{T}E\left[\|\rvbx_i\|^2\right] & \leq P_i, \quad \forall i=1,\ldots,M\label{eq:PAPC}
\end{align}
and the total average power constraint
\begin{align}
\frac{1}{T}\sum_{i=1}^{M}E\left[\|\rvbx_i\|^2\right] & \leq P\label{eq:TAPC}
\end{align}
where $\rvbx_{i}=(\rvx_{i}(1),\cdots,\rvx_{i}(T))$ is the input vector for sub-channel $i$. 

\begin{thm}
\label{thm:Gauss}
The capacity region under the per sub-channel average power constraint \eqref{eq:PAPC} is given by the union of all rate pairs $(R_1,R_2)$ that satisfy the following constraints:
\begin{align}
R_1 &\leq \min_{1 \leq k \leq K} \left\{\sum_{i=1}^{M}A^{(1)}_{k,i}(\mathbf{Q})\right\}\label{eq:Gau1}\\
R_2 &\leq \min_{1 \leq k \leq K} \left\{\sum_{i=1}^{M}A^{(2)}_{k,i}(\mathbf{Q})\right\}\label{eq:Gau2}
\end{align}
for some  power vector $\mathbf{Q}=(Q_1,\ldots,Q_M)$, where $0 \leq Q_i \leq P_i$ for all $i=1,\ldots,M$, 
\begin{align}
A^{(1)}_{k,i}&(\mathbf{Q})
:= \left[\frac{1}{2}\log\left(\frac{Q_i+\sigma_{k,i}^2}{\sigma_{k,i}^2}\right)-
\frac{1}{2}\log\left(\frac{Q_i+\delta_{i}^2}{\delta_{i}^2}\right)\right]^+\\
A^{(2)}_{k,i}&(\mathbf{Q})
:= \left[\frac{1}{2}\log\left(\frac{P_i+\delta_{i}^2}{Q_i+\delta_{i}^2}\right)-
\frac{1}{2}\log\left(\frac{P_i+\sigma_{k,i}^2}{Q_i+\sigma_{k,i}^2}\right)\right]^+
\end{align}
and $x^+:=\max\{x,0\}$.\hfill$\Box$
\end{thm}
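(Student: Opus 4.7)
The plan is to establish achievability and the converse separately, both leveraging Theorem~\ref{thm:DMC}. For achievability I would pick, on each sub-channel~$i$, a jointly Gaussian superposition pair $(\rvu_i,\rvx_i)$ with $\rvx_i = \rvu_i + \rvv_i$, where $\rvu_i \sim \mathcal{N}(0,P_i-Q_i)$ and $\rvv_i \sim \mathcal{N}(0,Q_i)$ are independent, and let the $M$ sub-channel distributions be independent. Since the capacity region depends on the noise only through the per-receiver marginals and the Gaussian law is infinitely divisible, one may assume without loss of generality that the scalar noises on sub-channel~$i$ are jointly coupled so that the degradation order~\eqref{eq:degOrd} holds. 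A direct Gaussian-entropy computation then yields
\begin{equation*}
I(\rvx_i;\rvy_{k,i}\mid \rvu_i,\rvz_i) = A^{(1)}_{k,i}(\mathbf{Q}),\qquad I(\rvu_i;\rvz_i\mid \rvy_{k,i}) = A^{(2)}_{k,i}(\mathbf{Q}),
\end{equation*}
with the $[\cdot]^+$ truncation emerging automatically: whenever $\sigma_{k,i}^2 \ge \delta_i^2$, the order of $\rvz_i$ and $\rvy_{k,i}$ in~\eqref{eq:degOrd} reverses and the bracketed expression inside $A^{(j)}_{k,i}$ is exactly the correct mutual information (zero in the first case, the Gaussian formula in the second).

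For the converse I would fix a rate pair $(R_1,R_2)$ on the boundary of the capacity region, pick a supporting-hyperplane multiplier $\mu\ge 0$, and combine Theorem~\ref{thm:DMC} with the trivial inequality $\min_k f_k \le f_{k^\ast}$ for any index $k^\ast$ to obtain
\begin{equation*}
R_1 + \mu R_2 \;\le\; \sum_{i=1}^{M}\Bigl[\, I(\rvx_i;\rvy_{k_1^\ast,i}\mid \rvu_i,\rvz_i) + \mu\, I(\rvu_i;\rvz_i\mid \rvy_{k_2^\ast,i}) \,\Bigr]
\end{equation*}
for some distribution $\prod_i p(\rvu_i,\rvx_i)$ with $E[\rvx_i^2]\le P_i$ and for any pair of receiver indices $k_1^\ast,k_2^\ast$. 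Since each summand depends only on the marginal law of $(\rvu_i,\rvx_i)$, this right-hand side is separable across sub-channels, reducing the outer bound to $M$ independent scalar maximizations.

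The main technical obstacle is this per-sub-channel maximization, namely bounding $I(\rvx;\rvy_{k_1^\ast}\mid \rvu,\rvz)+\mu\, I(\rvu;\rvz\mid \rvy_{k_2^\ast})$ under the scalar power constraint $E[\rvx^2]\le P_i$. My plan is to invoke an extremal inequality in the style of Costa~\cite{costa} together with its strengthening by Liu--Viswanath~\cite{liu-vector}: because all three outputs on a given sub-channel are noisy scalar observations of the same input $\rvx_i$, the triple $(\rvy_{k_1^\ast,i},\rvz_i,\rvy_{k_2^\ast,i})$ is stochastically degraded in an order determined by the permutation $\pi_i$ and the relative noise variances, and this places the weighted differential-entropy expression inside the class for which the extremal inequality identifies a jointly Gaussian $(\rvu_i,\rvx_i)$ as a maximizer. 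Parametrizing the maximizer by $Q_i\in[0,P_i]$, the conditional variance of $\rvx_i$ given $\rvu_i$, produces exactly $A^{(1)}_{k_1^\ast,i}(\mathbf{Q}) + \mu\, A^{(2)}_{k_2^\ast,i}(\mathbf{Q})$. Sweeping the indices $k_1^\ast, k_2^\ast$ over all receivers and $\mu$ over $[0,\infty)$ then traces out the entire boundary of the Gaussian inner bound, which matches the outer bound; any residual non-convexity in $\mathbf{Q}$ required for Lagrangian duality is handled by a standard time-sharing argument.
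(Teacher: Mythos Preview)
Your achievability argument is fine and matches the paper. The gap is in the converse: replacing each $\min_{k}$ by a single index $k_1^\ast$ (resp.\ $k_2^\ast$) loses exactly the information needed to make the outer bound tight. At a boundary point of the Gaussian region the optimal $\mathbf{Q}^\ast$ is determined by a constrained program in which several of the $K$ constraints on $R_1$ (and on $R_2$) can be simultaneously active; the supporting hyperplane is then characterized by a \emph{convex combination} of their gradients, not by any single one. Concretely, the paper writes the Lagrangian with full weight vectors $(\alpha_k)_{k}$ and $(\beta_k)_{k}$, derives the KKT stationarity condition \eqref{eq:KKT1} linking these weights to $Q_i^\ast$, and only then invokes the extremal inequality \eqref{eq:ExtIneq}, whose hypothesis is precisely that \eqref{eq:KKT1} holds. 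With your single-index choice you can still run an extremal inequality, but the maximizing $Q_i$ it produces is generically \emph{not} $Q_i^\ast$; you obtain
\[
R_1+\mu R_2 \;\le\; \max_{\mathbf{Q}}\sum_{i}\Bigl[A^{(1)}_{k_1^\ast,i}(\mathbf{Q})+\mu\,A^{(2)}_{k_2^\ast,i}(\mathbf{Q})\Bigr],
\]
which in general strictly exceeds $\min_k\sum_i A^{(1)}_{k,i}(\mathbf{Q}^\ast)+\mu\min_k\sum_i A^{(2)}_{k,i}(\mathbf{Q}^\ast)$. Sweeping over $(k_1^\ast,k_2^\ast,\mu)$ does not repair this, because no single pair of indices forces the first-order condition at $\mathbf{Q}^\ast$ when more than one constraint is tight there.

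The fix is exactly what the paper does: set up the program \eqref{eq:Cons1}--\eqref{eq:Cons4}, extract the full dual variables $(\alpha_k,\beta_k,M_{1,i},M_{2,i})$ from the KKT conditions, form the weighted sum $\bigl(\sum_k\alpha_k\bigr)R_1+R_2$, and apply the extremal inequality with \emph{all} the $\alpha_k$ (for $k\in\mathcal{Y}_i$) and $\beta_k$ (for $k\in\mathcal{Z}_i$) present. The KKT stationarity then guarantees that the Gaussian maximizer has variance exactly $Q_i^\ast$, and the complementary-slackness conditions \eqref{eq:KKT3}--\eqref{eq:KKT4} close the contradiction. Your time-sharing remark at the end does not address this; the difficulty is not non-convexity of the region but looseness of the single-index relaxation.
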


A proof of Theorem~\ref{thm:Gauss} is provided in section~\ref{sec:Gau}.

\begin{corol}
\label{corol:Gauss}
The capacity region under the total average power constraint \eqref{eq:TAPC} is given by the union of all rate pairs $(R_1,R_2)$ that satisfy the constraints \eqref{eq:Gau1} and \eqref{eq:Gau2} for some power vectors $\mathbf{P}=(P_1,\ldots,P_M)$ and $\mathbf{Q}=(Q_1,\ldots,Q_M)$, where $0 \leq Q_i \leq P_i$ for all $i=1,\ldots,M$ and $\sum_{i=1}^{M}P_i \leq P$.\hfill$\Box$
\end{corol}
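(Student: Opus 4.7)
The plan is to deduce Corollary~\ref{corol:Gauss} from Theorem~\ref{thm:Gauss} by a standard power-splitting argument that couples the per-sub-channel and total-power formulations. In one direction (achievability), any power vector $\mathbf{P}=(P_1,\ldots,P_M)$ with $\sum_i P_i\le P$ is a priori a valid allocation; invoking the achievability part of Theorem~\ref{thm:Gauss} with per-sub-channel constraints $P_i$ produces, for each $\mathbf{Q}$ with $0\le Q_i\le P_i$, a sequence of codes attaining $(R_1,R_2)$ as in \eqref{eq:Gau1}--\eqref{eq:Gau2} while expending at most $P_i$ on sub-channel $i$. Such a code automatically satisfies the total-power constraint \eqref{eq:TAPC}, so every rate pair in the union appearing in the corollary is achievable.

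For the converse, I would start from any sequence of length-$n$ private broadcast codes meeting the total-power constraint $\frac{1}{T}\sum_{i=1}^{M} E[\|\rvbx_i\|^2]\le P$, and define the induced per-sub-channel powers
\begin{equation}
\bar{P}_i \;:=\; \frac{1}{T}E\left[\|\rvbx_i\|^2\right], \qquad i=1,\ldots,M,
\end{equation}
which by construction satisfy $\sum_{i=1}^M \bar{P}_i\le P$. The same code trivially meets the per-sub-channel constraints \eqref{eq:PAPC} with $P_i$ replaced by $\bar{P}_i$, so the converse portion of Theorem~\ref{thm:Gauss} forces the rate pair $(R_1,R_2)$ to lie in the region described by \eqref{eq:Gau1}--\eqref{eq:Gau2} for some $\mathbf{Q}=(Q_1,\ldots,Q_M)$ with $0\le Q_i\le \bar{P}_i$. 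Choosing $\mathbf{P}=(\bar{P}_1,\ldots,\bar{P}_M)$ exhibits the rate pair as a member of the union in the corollary, establishing the desired inclusion.

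There is no substantive new obstacle here: the only thing to verify carefully is that the per-sub-channel achievability construction underlying Theorem~\ref{thm:Gauss} can be instantiated at the exact allocation $\mathbf{P}=(P_1,\ldots,P_M)$ of choice (this is immediate, since the theorem holds for any $\mathbf{P}$), and conversely that the empirical per-sub-channel powers $\bar{P}_i$ furnish a valid witness for the converse (which follows since Theorem~\ref{thm:Gauss} applies to any per-sub-channel power profile). The argument is essentially the same time-sharing/power-splitting step used to pass from fixed to constrained allocations in Gaussian broadcast problems, and no further information-theoretic machinery is needed beyond Theorem~\ref{thm:Gauss} itself.
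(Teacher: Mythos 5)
Your argument is correct and is exactly the ``well-known connection between the per sub-channel and the total average power constraints'' that the paper invokes for Corollary~\ref{corol:Gauss} without writing out a proof: achievability by instantiating Theorem~\ref{thm:Gauss} at any split $\sum_i P_i \le P$, and a converse by passing to the empirical per-sub-channel powers $\bar{P}_i$. The only detail worth a word is that the $\bar{P}_i$ may vary along the sequence of codes, so one extracts a convergent subsequence in the compact simplex $\{\sum_i \bar{P}_i \le P\}$ and uses continuity of the region in $\mathbf{P}$; this is standard and does not change the argument.
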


The above corollary follows directly from Theorem~\ref{thm:Gauss} and the well-known connection between the per sub-channel and the total average power constraints. We will not provide a proof of Corollary~\ref{corol:Gauss}.

\subsection{Fading Channels}
We consider a block-fading channel model with a coherence period of $T$ complex symbols. The channel output in coherence block $i$ is given by
\begin{align}
\rvby_{k}(i) &= \rvh_k(i) \rvbx(i) + \rvbn_k(i)\\
\rvbz(i) &= \rvg(i) \rvbx(i) + \rvbw(i), \quad i=1,2\ldots, M\label{eq:fading}
\end{align}
where the channel gains $\rvh_k(i)$ of the $K$ receivers in group $1$  and the channel gain $\rvg(i)$ of the group $2$ receiver are sampled independently in each coherence block $i$ and stay constant throughout the block. The coherence period $T$ will be taken to be sufficiently large so that random coding arguments can be invoke in each coherence block. The channel input $\rvbx(i) \in {\mathbb{C}}^T$ satisfies a long-term average power constraint 
\begin{align}
E\left[\frac{1}{MT}\sum_{i=1}^M||\rvbx(i)||^2\right] & \le P
\end{align}
whereas the additive noise vectors $\rvbn_k(i)$ and $\rvbw(i)$ have entries that are sampled  i.i.d.\ $\CN(0,1)$. We are interested in the ergodic communication scenario where the number of blocks $M$ used for communication can be arbitrarily large. Furthermore we assume that the channel gains in each coherence block are revealed to all terminals including the transmitter at the beginning of each coherence block.

\begin{thm}
The private broadcasting capacity region for the fading channel model consists of all rate pairs  $(R_1, R_2)$ that satisfy the following 
constraints:
\begin{align}
R_1 &\le \min_{1\le k \le K}E\left[\left\{\log\left(\frac{1+ Q(\rvbh, \rvg)|\rvh_k|^2}{1+Q(\rvbh, \rvg)|\rvg|^2}\right)\right\}^+\right],\label{eq:R1-fading}\\
R_2 &\le \min_{1\le k \le K} E\bigg[\left\{\log\left(\frac{1 + P(\rvbh, \rvg) |\rvg|^2}{1+Q(\rvbh,\rvg)|\rvg|^2}\right) - \log\left( \frac{1 + P(\rvbh, \rvg) |\rvh_k|^2}{1+Q(\rvbh,\rvg)|\rvh_k|^2}\right)\right\}^+\bigg],\label{eq:R2-fading}
\end{align}
for some power allocation functions $P(\rvbh, \rvg)$ and $Q(\rvbh, \rvg)$ that satisfy $0 \le Q(\rvbh, \rvg) \le P(\rvbh, \rvg)$ for all $(\rvbh, \rvg) \in {\mathbb C}^{K+1},$ and $E[P(\rvbh, \rvg)] \le P$, where $\rvbh := (\rvh_1, \ldots, \rvh_K)$ denotes the channel gains of the receivers in group $1$. \hfill$\Box$
\label{thm:fading}
\end{thm}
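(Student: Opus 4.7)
The plan is to reduce the block-fading problem to the parallel Gaussian model of Theorem~\ref{thm:Gauss} and Corollary~\ref{corol:Gauss}. Each coherence block, conditioned on its realized gains $(\rvbh(i),\rvg(i))$, is itself a scalar degraded Gaussian broadcast channel whose degradation order is determined by the pointwise comparison of $|\rvh_k(i)|^2$ with $|\rvg(i)|^2$. Since all terminals learn the channel gains at the start of each block, stacking the $M$ blocks yields a parallel degraded broadcast channel of the type covered by Theorem~\ref{thm:DMC}, and the long-term average power constraint corresponds to the total power constraint in Corollary~\ref{corol:Gauss}. The $\{\cdot\}^+$ appearing in \eqref{eq:R1-fading}--\eqref{eq:R2-fading} will match the $x^+$ in $A^{(1)}_{k,i}$ and $A^{(2)}_{k,i}$: for a given realization, only one of the two bounds is non-trivial for each $k$, depending on whether $|\rvh_k|^2$ exceeds $|\rvg|^2$ or not.

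For achievability, I would first restrict the channel gains to a finite set $\mathcal{S}$ by quantizing $(\rvbh,\rvg)$ into bins. By the law of large numbers, for large $M$ the empirical fraction of blocks in bin $s \in \mathcal{S}$ concentrates on $p(s) := \Pr[(\rvbh,\rvg) \in s]$. Within each bin the coherence blocks are i.i.d.\ and share a common degradation permutation, so Corollary~\ref{corol:Gauss} applies with per-bin powers $(P(s),Q(s))$ satisfying $0 \le Q(s) \le P(s)$ and $\sum_s p(s) P(s) \le P$. Grouping the resulting sum according to bins turns it into a weighted average of the single-block $A^{(1)}_{k}$ and $A^{(2)}_{k}$ expressions, with the minimum over $k$ retained outside as in Corollary~\ref{corol:Gauss}. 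Refining the quantization and invoking a standard density argument recovers the expectations in \eqref{eq:R1-fading}--\eqref{eq:R2-fading} for arbitrary measurable power allocations $P(\rvbh,\rvg), Q(\rvbh,\rvg)$.

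For the converse, I would condition on the full channel realization sequence $\mathcal{H}^M := \{(\rvbh(i),\rvg(i))\}_{i=1}^M$, which by assumption is available to every terminal. Conditioned on $\mathcal{H}^M$, the code is a parallel degraded private broadcast code in the sense of Theorem~\ref{thm:DMC}, and the Gaussian converse of Theorem~\ref{thm:Gauss} applies block-by-block with per-block powers $\tilde{P}_i := E[\|\rvbx(i)\|^2 \mid \mathcal{H}^M]/T$ and an auxiliary-induced power $\tilde{Q}_i \le \tilde{P}_i$. Grouping blocks by their realization $s$, one can pass to the conditional averages $P(\rvbh,\rvg) := E[\tilde{P}_i \mid (\rvbh(i),\rvg(i)) = (\rvbh,\rvg)]$ and similarly $Q(\rvbh,\rvg)$; concavity of the $\log$ terms in the power (for fixed degradation order) ensures that this averaging can only enlarge the right-hand sides, so no additional slack is introduced. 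Finally, the ergodicity of the fading process together with the long-term power constraint gives $E[P(\rvbh,\rvg)] \le P$, and taking $M \to \infty$ in Fano's inequality and the secrecy constraint~\eqref{eq:secrecy} delivers the bounds~\eqref{eq:R1-fading}--\eqref{eq:R2-fading}, with the minimum over $k$ remaining outside the expectation just as it appears outside the block sum in Theorem~\ref{thm:Gauss}.

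The main technical obstacle is the measure-theoretic step that converts the discrete Corollary~\ref{corol:Gauss} into the integral form of Theorem~\ref{thm:fading}: one must ensure uniform convergence of the achievable rates as the quantization is refined, verify that the optimizing power allocations can be chosen measurable in $(\rvbh,\rvg)$, and justify interchanging the minimum over $k$ with both the sum over blocks and the expectation over realizations. Standard arguments used in the ergodic fading wiretap literature (\cf~\cite{gopalaLaiElGamal:06Secrecy}) handle these issues, since the per-block degradation permutation is a deterministic function of the realization and the single-letter rates of Corollary~\ref{corol:Gauss} depend continuously on $(P_i, Q_i)$.
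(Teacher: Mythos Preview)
Your proposal is correct and mirrors the paper's approach: quantize the fading states to reduce achievability to the parallel Gaussian result (the paper states an intermediate Corollary~\ref{corol:prob-chan} for the probabilistic state model, which is exactly your law-of-large-numbers binning), and for the converse reveal the gains non-causally so that Theorem~\ref{thm:Gauss}/Corollary~\ref{corol:Gauss} applies to the $M$ blocks as parallel sub-channels. The one detail the paper makes explicit that you leave to the ``density argument'' is the \emph{asymmetric} quantization---rounding each intended receiver's gain down to the grid and each eavesdropper's gain up---so that the discretized channel is a degraded version of the true one and the quantized rates are genuinely achievable; conversely, your converse is actually more careful than the paper's one-line reduction, since you spell out the averaging step from per-block powers $(\tilde P_i,\tilde Q_i)$ to a realization-dependent allocation $P(\rvbh,\rvg),Q(\rvbh,\rvg)$.
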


A proof of Theorem~\ref{thm:fading} is provided in Section~\ref{sec:fading}.

Theorems~\ref{thm:DMC},~\ref{thm:Gauss}  and~\ref{thm:fading} constitute the main results in this paper. 

\section{Coding Theorem}
\label{sec:coding}
\begin{figure}
\centering
\includegraphics[scale=0.45, angle = 270]{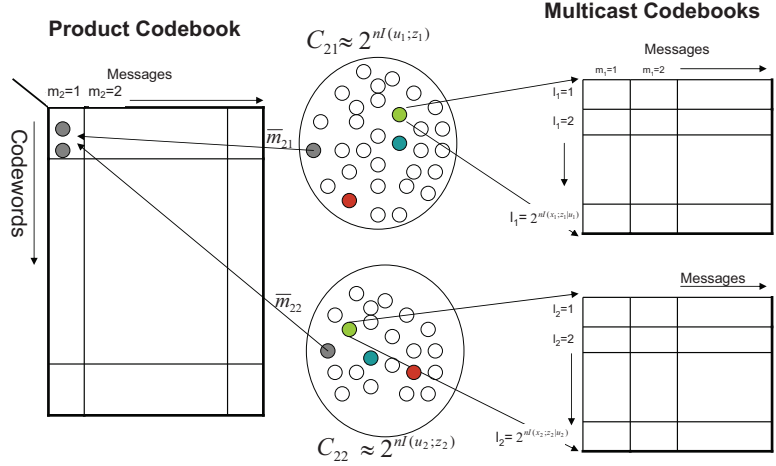}

\vspace{-3em}
\caption{Superposition construction for the case of two channels. The product codebook for the group $2$ user is obtained by taking a cartesian
$\cC_{21} \times \cC_{22}$ of two independently generated codebooks and binning the resulting codeword pairs. The multicast codebook is generated, conditioned on the codewords of $\cC_{21}$ and $\cC_{22}$.}
\label{fig:superposition}
\end{figure}

The basic idea behind our coding scheme is illustrated in Fig.~\ref{fig:superposition}.  The message $\rvm_2$ is encoded using a product codebook~\cite{LiuPrabhakaran:08,gopalaLaiElGamal:06Secrecy}, whose codewords are obtained by taking cartesian product of the $M$ codebooks, one for each of the parallel channels. The message $\rvm_1$ is encoded using a multicast codebook~\cite{ashishBC}, also consisting of $M$ codebooks. As shown in Fig.~\ref{fig:superposition}, the codewords of the product-codebook constitute cloud centers
of the superposition codebook, whereas the codewords of the multicast codebook constitute the satellite codewords. We describe the details of our construction  in the following sub-sections.

\subsection{Product-Codebook Construction}
The message $\rvm_2$ is encoded using a product codebook~\cite{LiuPrabhakaran:08,gopalaLaiElGamal:06Secrecy}. Let $\cM_{2,i}$ be the set of all binary sequences of length $N_{2,i}=n(I(\rvu_i; \rvz_i)-2\eps)$ i.e.,
\begin{align}
\cM_{2,i} := \{0,1\}^{N_{2,i}}.
\label{eq:m-2i}
\end{align}
On channel $i$, we generate a codebook $\cC_{2,i}: \cM_{2,i} \rightarrow \cU_{i}^n$ consisting of $|\cM_{2,i}|$ codewords, i.e.,
\begin{align}
\cC_{2,i} := \left\{u_i^n(\rvmu_{2,i}): \rvmu_{2,i} \in \left[1, 2^{N_{2,i}}\right]\right\},
\end{align}
where each sequence $u_i^n$ is sampled i.i.d.\ from the distribution $p_{\rvu_i}(\cdot)$. Let 
 \begin{align}
\cM_2 &:= \cM_{2,1} \times \cM_{2,2} \times \ldots \times \cM_{2,M}\\
\label{eq:M-prod}
&=\left\{(\rvmu_{2,1},\ldots, \rvmu_{2,M}): \rvmu_{2,i} \in \{0,1\}^{N_i}, i=1,\ldots, M\right\}.
\end{align}

As shown in Fig.~\ref{fig:superposition}, we partition the set $\cM_2$ into $2^{nR_2}$ bins such that there are $L_2 = 2^{n\left\{\sum_{i=1}^M I(\rvu_i;\rvz_i) -R_2-M\eps \right\}}$ sequences in each bin. 
Each bin corresponds to one message $\rvm_2 \in [1,2^{nR_2}]$. 
Thus given a message $\rvm_2$ the encoder selects one sequence   $(\rvmu_{2,1},\ldots, \rvmu_{2,M})\in \cM_2$ uniformly at random from the corresponding bin.
On channel $i$ we select the codeword  $\rvu_i^n \in \cC_{2,i}$ associated with $\rvmu_{2,i}$. We note that from our construction, each sequence in $\cM_2$ is equally likely i.e.,
\begin{multline}
\Pr(\rvmu_{2,1}=\bar{m}_{2,1},\ldots, \rvmu_{2,M}= \bar{m}_{2,M})  = \prod_{j=1}^M \Pr(\rvmu_{2,j}=\bar{m}_{2,j}) = \frac{1}{|\cM_{2,1}|\times |\cM_{2,2}|\ldots, |\cM_{2,M}|}.\label{eq:M2-exp}
\end{multline}

\subsection{Multicast-Code Construction}
The codebook associated with $\rvm_1$ is a secure multicast codebook~\cite{ashishBC}. 
For each $\rvu_i^n \in \cC_{2,i},$ and each $\rvm_1 \in [1,2^{nR_1}]$ we construct a codebook
$\cC_{1,i}(\rvu_i^n, m_1)$ consisting of a total of  $L_{1,i}=2^{n (I(\rvx_i; \rvz_i|\rvu_i)+\eps)}$ codeword sequences of length $n$, 
each sampled i.i.d. from the distribution $\prod_{j=1}^n p_{\rvx_i|\rvu_i}(x_{ij}|u_{ij})$. 

Let $\rvl_{1,i}$ be uniformly distributed over $[1,L_{1,i}]$.  Given a message $\rvm_1 \in [1,2^{nR_1}]$ and codewords $(\rvu_1^n,\ldots, \rvu_M^n),$ 
selected in the base layer, we select the sequence $\rvx_i^n$  from the codebook $\cC_{1,i}(\rvu_i^n, \rvm_1)$ corresponding to  the
randomly and uniformly generated index $\rvl_{1,i}$. The sequence $\rvx_i^n$ is transmitted on sub-channel $i$.

The following property will be useful in our subsequent analysis. 
\begin{lemma}
The sequences $(\rvx_1^n, \rvx_2^n,\ldots, \rvx_M^n)$ are conditionally independent given  $\rvm_1$.
\label{lem:cond-ind}
\end{lemma}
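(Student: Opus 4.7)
The plan is to trace the factorization structure of the encoder from the bottom up, exploiting the fact that both the product-codebook bin representative and the per-sub-channel indices in the multicast layer split across sub-channels. The argument is essentially a bookkeeping exercise on the encoding rule, with no information-theoretic content.

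First I would observe that $\rvm_1$ and $\rvm_2$ are generated independently, that the intra-bin representative for $\rvm_2$ is sampled uniformly from its bin, and that this entire $\rvm_2$-encoding procedure is independent of $\rvm_1$. Equation~\eqref{eq:M2-exp} asserts that the unconditional joint law of the product-codebook index vector $(\rvmu_{2,1},\ldots,\rvmu_{2,M})$ is uniform on $\cM_{2,1}\times\cdots\times\cM_{2,M}$, so these indices are mutually independent. Because $\rvm_1$ is independent of every source of randomness used in producing $(\rvmu_{2,1},\ldots,\rvmu_{2,M})$, the same product law is preserved after conditioning on $\rvm_1$. Since each cloud-center sequence $\rvu_i^n$ is a deterministic function of $\rvmu_{2,i}$ alone through the codebook $\cC_{2,i}$ (and the codebooks $\cC_{2,1},\ldots,\cC_{2,M}$ are generated independently), the sequences $\rvu_1^n,\ldots,\rvu_M^n$ inherit conditional independence given $\rvm_1$.

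Next I would handle the satellite layer. Given $\rvm_1$ and $\rvu_i^n$, the transmitted sequence $\rvx_i^n$ is obtained by drawing an index $\rvl_{1,i}$ uniformly on $[1,L_{1,i}]$ and reading off the corresponding codeword of $\cC_{1,i}(\rvu_i^n,\rvm_1)$. The indices $\rvl_{1,1},\ldots,\rvl_{1,M}$ are drawn mutually independently and independently of everything else, and the sub-codebooks $\cC_{1,i}$ are generated independently across $i$. Conditional on $\rvm_1$, the pairs $(\rvu_i^n,\rvl_{1,i})$ are therefore mutually independent across $i$, and each $\rvx_i^n$ is a function of $(\rvm_1,\rvu_i^n,\rvl_{1,i})$ alone. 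Combining the two layers yields
\[
p_{\rvx_1^n,\ldots,\rvx_M^n\mid\rvm_1}(x_1^n,\ldots,x_M^n\mid m_1)=\prod_{i=1}^{M}p_{\rvx_i^n\mid\rvm_1}(x_i^n\mid m_1),
\]
which is the claimed conditional independence. The only step that needs any care is the preservation of the product structure under conditioning on $\rvm_1$, and this is immediate from the independence of $\rvm_1$ and the entire $\rvm_2$-encoding pipeline; I do not anticipate a serious obstacle.
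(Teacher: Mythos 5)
Your proposal is correct and follows essentially the same route as the paper: the paper writes the argument as an explicit marginalization over $(\rvmu_{2,1},\ldots,\rvmu_{2,M})$ using their independence from $\rvm_1$ and the product law \eqref{eq:M2-exp}, then factors $p(\rvx_1^n,\ldots,\rvx_M^n\mid\rvm_1,\rvmu_{2,1},\ldots,\rvmu_{2,M})$ across sub-channels and interchanges sum and product, which is exactly the factorization you describe in terms of functions of independent randomness. No gap.
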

\begin{proof}
Note that
\begin{align}
p(\rvx_1^n,\ldots, \rvx_M^n|\rvm_1) &= \sum_{\{\rvmu_{2,i}\}} p(\rvx_1^n,\ldots, \rvx_M^n, \rvmu_{2,1},\ldots, \rvmu_{2,M}|\rvm_1)\\
&=\sum_{\{\rvmu_{2,i}\}} p(\rvx_1^n,\ldots, \rvx_M^n |\rvm_1,\rvmu_{2,1},\ldots, \rvmu_{2,M})p(\rvmu_{2,1},\ldots, \rvmu_{2,M})\label{eq:msg-indep}\\
&=\sum_{\{\rvmu_{2,i}\}} p(\rvx_1^n,\ldots, \rvx_M^n |\rvm_1,\rvmu_{2,1},\ldots, \rvmu_{2,M})p(\rvmu_{2,1})\ldots p(\rvmu_{2,M})\label{eq:msg-indep2}\\
&=\sum_{\{\rvmu_{2,i}\}} p(\rvx_1^n|\rvm_1, \rvmu_{2,1})\ldots p(\rvx_M^n|\rvm_1,\rvmu_{2,M})p(\rvmu_{2,1})\ldots p(\rvmu_{2,M})\label{eq:xindep}\\
&=\prod_{i=1}^M \sum_{\rvmu_{2,i}} p(\rvx_i^n|\rvm_1, \rvmu_{2,i})p(\rvmu_{2,i})\\
&=\prod_{i=1}^M \sum_{\rvmu_{2,i}} p(\rvx_i^n,\rvmu_{2,i}|\rvm_1)\\
&=\prod_{i=1}^M p(\rvx_i^n|\rvm_1)\label{eq:xindep2}
\end{align}
where~\eqref{eq:msg-indep} follows from the fact that the messages $\rvmu_{2,1},\ldots, \rvmu_{2,M}$ are independent of $\rvm_1$;~\eqref{eq:msg-indep2} follows from the fact that the messages satisfy~\eqref{eq:M2-exp};~\eqref{eq:xindep} follows from the fact that each $\rvx_i^n \in \cC_{1,i}(\rvm_1, \rvu_i^n)$ and $\rvu_i^n$ is a function of $\rvmu_{2,i}$. Eq.~\eqref{eq:xindep2} establishes the conditional independence of the messages and completes the proof.
\end{proof}

\subsection{Decoding and Error Analysis}
\subsubsection{Decoding of Message $\rvm_1$}
Receiver $k$ in group $1$ selects those sub-channels $\cJ_k$ where it is stronger than the group $2$ receiver:
\begin{align}
\cJ_k = \big\{i \in [1,M]: \rvx_i \rightarrow \rvy_{k,i} \rightarrow \rvz_{i} \big\} \label{eq:j-def}
\end{align}
\begin{itemize}
\item For each $i \in \cJ_k$, receiver $k$ selects a sequence $\hat{\rvu}^n_i \in \cC_{2,i}$ such that\footnote{We will use the notion of strong typicality. The set $T_\eps^n(\rvx, \rvy)$ denotes the $\eps$-strongly typical set. } $(\hat{\rvu}_i^n, \rvy_{k,i}^n) \in T_\eps^n(\rvu_i,\rvy_{k,i})$. We define $\cE_k$ as the event that there exists some $i \in \cJ_k$ such that $\{\hat{\rvu}_i^n \neq \rvu_i^n\}$.
\item Receiver $k$ then searches for a message $\hat{\rvm}_1 \in [1,2^{nR_1}]$ with the following property: for each $i \in \cJ_k$ there exists a codeword $\rvx_i^n \in \cC_{1,i}(\rvm_1, \hat{\rvu}_i^n)$ such that $(\rvx_i^n, \rvy_{k,i}^n) \in T_\eps^n(\rvx_i, \rvy_{k,i}|\rvu_i)$.  An error is declared if $\hat{\rvm}_1 \neq \rvm_1$. 
\end{itemize}
Now observe that
\begin{align}
\Pr(\hat{\rvm}_1 \neq \rvm_1) &\le \Pr(\cE_k)  + \Pr(\hat{\rvm_1}\neq \rvm_1| \cE_k^c).\label{eq:pe-anal}
\end{align}

Since $\left|{\cC_{2,i}}\right| \le 2^{n(I(\rvu_i;\rvz_i)-\eps)}$ and $I(\rvu_i;\rvy_{k,i})\ge I(\rvu_i;\rvz_i)$ for each $i \in \cJ_k,$ it follows that
$\Pr(\cE_k) \le M\eps$.

To bound the second term in~\eqref{eq:pe-anal} we use the union bound and analysis of typical events.
\begin{align}
\Pr(\hat{\rvm}_1\neq \rvm_1 | \cE_k^c) &\le 2^{nR_1}\prod_{i \in \cJ_k}\big\{\left|\cC_{1,i}\right|2^{-n(I(\rvx_i; \rvy_{k,i}|\rvu_i) -\eps)}\big\}\\
&\le 2^{nR_1}2^{-n \sum_{i\in \cJ_k}\left(I(\rvx_i;\rvy_{k,i}|\rvu_i) - I(\rvx_i;\rvz_i|\rvu_i)-2\eps\right) }\\
&= 2^{nR_1} 2^{-n \sum_{i\in \cJ_k}\left(I(\rvx_i;\rvy_{k,i}|\rvu_i,\rvz_i) - 2\eps\right)}
\end{align}
which goes to zero provided that  $R_1 \le \sum_{i\in \cJ_k} I(\rvx_i;\rvy_{k,i}|\rvu_i,\rvz_i) -(2M+1)\eps $. Since $\eps>0$ is arbitrary, our choice of $R_1$ in~\eqref{eq:R1e} thus guarantees that the error probability associated with message $\rvm_1$ vanishes to zero.

\subsubsection{Decoding of message $\rvm_2$}
The receiver in group $2$ decodes message $\rvmu_{2,i}$ on sub-channel $i$ by searching for a sequence $\rvu_i^n \in \cC_{2,i}$ that is jointly typical with $\rvz_i^n$. Since the number of codewords in $\cC_{2,i}$ does not exceed $2^{n (I(\rvu_i;\rvz_i)-2\eps)},$ this event succeeds with high probability.  Hence the receiver correctly decodes $(\rvmu_{2,1},\ldots, \rvmu_{2,M})$ and in turn message $\rvm_2$ with high probability.

\subsection{Secrecy Analysis}
In order to establish the secrecy of message $\rvm_1$ we need to show that
\begin{equation}
\frac{1}{n}I(\rvm_1; \rvbz^n | \cC) \le \eps_n \label{eq:m1-sec}
\end{equation}

Using Lemma~\ref{lem:cond-ind} and the fact that the channels are independent, we have that $\rvz_1^n,\ldots, \rvz_M^n$ are conditionally independent
given $\rvm_1$. It follows that
\begin{align}
\frac{1}{n}I(\rvm_1; \rvbz^n | \cC) \le  \sum_{i=1}^M I(\rvm_1; \rvz_i^n |\cC).\label{eq:m1-sec-sum}
\end{align}
Since in our conditional codebook construction, there are $2^{n (I(\rvx_i; \rvz_i |\rvu_i)+\eps)}$ sequences in each codebook $\cC_{1,i}(\rvu_i^n,\rvm_1),$
it follows from standard arguments that $\frac{1}{n}I(\rvm_1; \rvz_i^n|\cC) \le \eps_n$. The secrecy constraint~\eqref{eq:m1-sec}  now follows.

To establish secrecy of message $\rvm_2$ with respect to  user $1$ in group $1$, we show that
\begin{equation}
\frac{1}{n}H(\rvm_2 | \rvby_1^n, \rvm_1) \ge R_2 - \eps_n. \label{eq:sec-cond}
\end{equation}
where for simplicity we drop the subscript associated with user $1$ in the sequence $\rvby_1^n$.
Without loss of generality, we assume that sub-channels $i=1,2,\ldots, L$ satisfy $\rvx_i \rightarrow \rvz_{i} \rightarrow \rvy_{i}$
while sub-channels $i = L+1,\ldots, M$ satisfy $\rvx_i \rightarrow \rvy_i \rightarrow \rvz_{i}$.  Now consider
\allowdisplaybreaks{\begin{align}
H(\rvm_2 | \rvby_1^n, \rvm_1) &= H(\rvm_2|\rvy_1^n,\ldots, \rvy_M^n,\rvm_1)\\
&=H(\rvmu_{2,1}^M | \rvy_1^n, \ldots, \rvy_M^n, \rvm_1)  - H(\rvmu_{2,1}^M|\rvm_2, \rvm_1, \rvy_1^n,\ldots, \rvy_M^n)\\
&=\sum_{j=1}^M H(\rvmu_{2,j}|\rvy_j^n, \rvm_1) -H(\rvmu_{2,1}^M|\rvm_2, \rvm_1, \rvy_1^n,\ldots, \rvy_M^n)\label{eq:m-indep}\\
&\ge\sum_{j=1}^L H(\rvmu_{2,j}|\rvy_j^n, \rvm_1) -H(\rvmu_{2,1}^M|\rvm_2, \rvm_1, \rvy_1^n,\ldots, \rvy_M^n)\label{eq:ent-bnd2}
\end{align}}where~\eqref{eq:m-indep} follows by establishing that the collection of pairs $\{(\rvm_{2,1}, \rvy_1^n),\ldots, (\rvm_{2,M},\rvy_M^n)\}$   is conditionally independent given $\rvm_1$, which can be establishes in a manner similar to the proof of Lemma~\ref{lem:cond-ind} and~\eqref{eq:ent-bnd2}
follows from the fact that the entropy function is non-negative and therefore we can drop the terms $L+1,\ldots, M$ in the first summation.

We lower bound the first term in~\eqref{eq:ent-bnd2}. Recall that $\rvmu_{2,j}$ is uniformly distributed over $\cC_{2,j}$ with $|\cC_{2,j}| = 2^{n(I(\rvu_j;\rvz_j)-\eps)}$. Furthermore, the corresponding codeword $\rvu_j^n$ is the base codeword in $\cC_{1,j}(\rvm_1, \rvu_j^n)$ and$$\left|\cC_{1,j}(\rvm_1, \rvu_j^n)\right|= 2^{n(I(\rvx_j;\rvz_j|\rvu_j)-\eps)} \ge 2^{n(I(\rvx_j;\rvy_j|\rvu_j)-\eps)},$$ since the channel satisfies the relation $\rvx_j\rightarrow\rvz_j\rightarrow\rvy_j$ for $j=1,\ldots, L$. Since the satellite codeword $\rvx_j^n$ is uniformly selected from $\cC_{1,j}$ it follows that~\cite[Remark 22.2, pp.~554-555]{ElGamal-Kim}
\begin{align}
\frac{1}{n}H(\rvmu_{2,j}|\rvy_j^n, \rvm_1) \ge I(\rvu_j;\rvz_j) -I(\rvu_j;\rvy_j) -\eps. \label{eq:term-a0}
\end{align}
and therefore using the fact that $\rvu_j \rightarrow  \rvz_j \rightarrow \rvy_j,$ we have
\begin{align}
\frac{1}{n}\sum_{j=1}^LH(\rvmu_{2,j}|\rvy_j^n, \rvm_1) \ge \sum_{j=1}^LI(\rvu_j;\rvz_j|\rvy_j)  -L\eps. \label{eq:term-a}
\end{align}
We next upper bound the second term in~\eqref{eq:ent-bnd2}. Note that
\begin{align}
H(\rvmu_{2,1}^M|\rvm_2, \rvm_1, \rvy_1^n,\ldots, \rvy_M^n) \le H(\rvmu_{2,1}^M|\rvm_2, \rvm_1, \rvy_1^n,\ldots, \rvy_L^n,\rvz_{L+1}^n,\ldots \rvz_M^n)\label{eq:degradation} 
\end{align} since $\rvz_j^n$ is a degraded version of $\rvy_j^n$ on channels $j \in \{L+1,\ldots, M\}$. Also note that
\begin{align}
\tilde{R} &= \frac{1}{n}H(\rvmu_{2,1},\ldots, \rvmu_{2,M})\\
&=\frac{1}{n}\sum_{i=1}^M H(\rvmu_{2,i})\\
&= \sum_{i=1}^M \left\{I(\rvu_i; \rvz_i) - 2\eps\right\},\label{eq:Rtexp}
\end{align}
where we use the fact that the messages $(\rvmu_{2,1},\ldots, \rvmu_{2,M})$ are mutually independent (c.f.~\eqref{eq:M2-exp}). Furthermore we select
\begin{align}
R_2 &= \frac{1}{n}H(\rvm_2)\\
&\le \sum_{i=1}^L I(\rvu_i; \rvz_i| \rvy_i) -(2M+1)\eps.\label{eq:R2exp}
\end{align}
Note that
\begin{align}
\tilde{R} - R_2 &> \sum_{i=1}^L I(\rvu_i; \rvy_i) + \sum_{i=L+1}^M I(\rvu_i; \rvz_i)\\ &= I(\rvu_1,\ldots, \rvu_M; \rvy_1,\ldots, \rvy_L, \rvz_{L+1},\ldots, \rvz_M)
\end{align}
where the last step follows from the fact that we have selected $\rvu_1,\ldots, \rvu_M$ to be mutually independent and the channels are also independent.
We can therefore conclude that (c.f.~\cite[Lemma 22.1,~Remark 22.2, pp.~554-555]{ElGamal-Kim},\cite[Lemma~1]{khiang:12})
\begin{align}
\frac{1}{n}&H(\rvmu_{2,1}^M|\rvm_2, \rvm_1, \rvy_1^n,\ldots, \rvy_L^n,\rvz_{L+1}^n,\ldots \rvz_M^n) \notag\\
&\le \tilde{R} - R_2 - I(\rvu_1,\ldots, \rvu_M; \rvy_1,\ldots, \rvy_L, \rvz_{L+1},\ldots, \rvz_M) + \eps\\
&= \sum_{i=1}^L I(\rvu_i; \rvz_i|\rvy_i) - R_2 + \eps \label{eq:term-b}. 
\end{align}
Substituting~\eqref{eq:term-a} and~\eqref{eq:term-b} into~\eqref{eq:ent-bnd2} we have that
\begin{align}
\frac{1}{n}H(\rvm_2 | \rvby_1^n, \rvm_1)  &\ge R_2-(L+1)\eps,
\end{align}
Since $\eps >0$ can be arbitrarily small, this establishes the secrecy of message $\rvm_2$ with respect to user $1$ in group $1$. The secrecy with respect to every other
user can be established in a similar fashion.

\begin{remark}
The superposition approach uses the codewords for the group $2$ user as cloud centers and the codewords of the group $1$ user as satellite codewords.
To justify this, note that on any given channel, say channel $i$, there is an ordering of receivers  as in~\eqref{eq:degOrd}. Receivers $\{\pi_i(l_i+1),\ldots, \pi(K)\}$ belonging to group $1$ that are weaker than the group $2$ user. It can be seen that these receivers do not learn any information on channel $i$.  Thus among all the set of {\emph{active}} users on any given channel, the group $2$ user is the weakest user. Therefore the associated codeword of the group $2$ user constitutes the cloud center. 
\end{remark}

\section{Converse}
\label{sec:conv}
We first show that there exists a choice of auxiliary variables $\rvu_i(j)$ that satisfy the Markov chain condition
\begin{align}
\label{eq:rl}
\rvu_i(j) \rightarrow \rvx_i(j) \rightarrow \rvy_{\pi(1),i}(j) \cdots \rvy_{\pi(l_i),i}(j) \rightarrow \rvz_i(j) \rightarrow  \rvy_{\pi(l_i+1),i}(j)  \cdots \rvy_{\pi(K),i}(j).
\end{align}
such that the rates $R_1$ and $R_2$ are upper bounded by
\begin{align}
nR_1 &\le \sum_{i=1}^M \sum_{j=1}^n I(\rvx_i(j); \rvy_{k,i}(j) | \rvu_i(j), \rvz_i(j))+ 2n\eps_n \label{eq:R-ub1}\\
n R_2 &\le \sum_{i=1}^M \sum_{j=1}^n I(\rvu_i(j); \rvz_i(j) | \rvy_{k,i}(j)) + 2n\eps_n \label{eq:R-ub2}
\end{align}
for each $k \in \{1,\ldots, K\}$.

In particular we show that the choice of $\rvu_i(j)$ is given by the following:
\begin{align}
\rvu_i(j) = \big\{\rvm_2, \bbZ_{\bsi}^n, \bbz_{i,j+1}^n, \bbz_{i}^{j-1} \big\}\label{eq:u-def}
\end{align}
where we introduce (c.f.~\eqref{eq:rl})
{\allowdisplaybreaks{
\begin{align}
\bbz^n_i &:= (\rvz_i^n, \rvy_{\pi(l_i+1),i}^n,\ldots, \rvy_{\pi(K),i}^n),\\
\bbZ^n_{\bsi} &:= (\bbz_1^n,\ldots, \bbz_{i-1}^n, \bbz_{i+1}^n, \ldots, \bbz_M^n),\\
\bbz_{i}^{j-1} &:= (\rvz_i^{j-1}, \rvy_{\pi(l_i+1),i}^{j-1},\ldots, \rvy_{\pi(K),i}^{j-1}),\\
\bbz_{i,j+1}^n &:=(\rvz_{i,j+1}^{n}, \rvy_{\pi(l_i+1),i,j+1}^{n},\ldots, \rvy_{\pi(K),i,j+1}^{n}),
\end{align}
}}and observe our choice of $\rvu_i(j)$ in~\eqref{eq:u-def}  indeed satisfies~\eqref{eq:rl}.
Note that $\bbz_i^n$ is the collection of the Group $2$ receiver's channel output as well as the output
of all the receivers $\{\pi(l_i+1),\ldots, \pi(K)\}$ in Group $1$ that are degraded with respect to the group $2$ receiver on channel $i$.

We begin with the secrecy constraint associated with message $\rvm_2$ with respect to user $k$ in group $1$. 
Let us define the following:
\begin{align}
\bar{\rvy}_{k,i}^n := \begin{cases}
\rvy_{k,i}^n, & \rvx_k \rightarrow \rvz_i \rightarrow \rvy_{k,i}\\
\rvz_i^n, & \rvx_k  \rightarrow \rvy_{k,i}\rightarrow \rvz_i,
\end{cases}\label{eq:by-def}
\end{align}
\begin{align}
&\bby_k^n :=(\bar{\rvy}_{k,1}^n,\ldots,\bar{\rvy}_{k,M}^n), \quad
\rvbz^n := (\rvz_1^n,\ldots, \rvz_M^n),\\
&\bby_{k,i}^n :=(\bar{\rvy}_{k,1}^n,\ldots,\bar{\rvy}_{k,i}^n),\quad
\rvbz_{i}^n:=(\rvz_1^n,\ldots, \rvz_{i}^n).
\end{align}
Thus $\bby_k^n$ corresponds to a weaker receiver, whose  output on channel $i$ is degraded to $\rvz_i^n,$
if user $k$ is stronger than the group $2$ user on this sub-channel.
Clearly we  have that $\frac{1}{n}I(\rvm_2; \bby_k^n) \le \eps_n$ whenever $\frac{1}{n}I(\rvm_2;\rvby_k^n) \le \eps_n$. We thus have
{\allowdisplaybreaks{\begin{align}
n(R_2-2\eps_n) &\le I(\rvm_2; \rvbz^n) - I(\rvm_2; \bby_k^n)\\
&\le I(\rvm_2; \rvbz^n | \bby_k^n)\\
&= \sum_{i=1}^M \sum_{j=1}^n I(\rvm_2; \rvz_i(j) | \rvz_i^{j-1}, \rvbz_{i-1}^n,\bby_k^n )\\
&\le \sum_{i=1}^M \sum_{j=1}^n I(\rvm_2,  \rvz_i^{j-1}, \rvz_{i, j+1}^n, \rvbz_{i-1}^n, \bby_{k\bsi}^n, \bar{\rvy}_{k, i}^{j-1}, \bar{\rvy}_{k,i,j+1}^n; \rvz_i(j)   | \bar{\rvy}_{k,i}(j))\\
&\le \sum_{i=1}^M \sum_{j=1}^n I(\rvm_2, \bbZ_{\bsi}^n, \bbz_{i,j+1}^n, \bbz_{i}^{j-1}; \rvz_i(j) | \bar{\rvy}_{k,i}(j))\label{eq:subset-in}\\
&= \sum_{i=1}^M \sum_{j=1}^n I(\rvu_i(j); \rvz_i(j)| \bar{\rvy}_{k,i}(j) )\\
&=\sum_{i=1}^M \sum_{j=1}^n I(\rvu_i(j); \rvz_i(j)| {\rvy}_{k,i}(j) )\label{eq:reduction}
\end{align}}}
where~\eqref{eq:subset-in} follows from the fact that
\begin{align}
(\rvbz_{i-1}^n, \bby_{k\bsi}^n) \subseteq \bbZ_{\bsi}^n, \quad
(\rvz_i^{j-1}, \bar{\rvy}_{k, i}^{j-1}) \subseteq \bbz_{i}^{j-1},\quad
(\rvz_{i, j+1}^n,\bar{\rvy}_{k,i,j+1}^n) \subseteq \bbz_{i,j+1}^n,
\end{align}and~\eqref{eq:reduction} follows from the fact whenever $\rvy_{k,i}(j) \neq \bar{\rvy}_{k,i}(j) $ then $\rvz_i(j)$
is a degraded version of $\rvy_{k,i}(j)$  and from~\eqref{eq:by-def}, we have that
\begin{align}I(\rvu_i(j); \rvz_i(j)| {\rvy}_{k,i}(j) ) = I(\rvu_i(j); \rvz_i(j)| \bar{\rvy}_{k,i}(j) )=0.\end{align}
This establishes~\eqref{eq:R-ub2}. 

Next, we upper bound $R_1$ as follows:
{\allowdisplaybreaks{\begin{align}
n(R_1-2\eps_n) & \le I(\rvm_1; \rvby_k^n) - I(\rvm_1; \rvbz^n, \rvm_2) \\
&\le I(\rvm_1; \rvby_k^n | \rvbz^n, \rvm_2) \\
&\le \sum_{i=1}^M \sum_{j=1}^n I(\rvm_1; \rvy_{k,i}(j) | \rvy_{k,i}^{j-1}, \rvby_{k,i-1}^n, \rvbz^n, \rvm_2)\\
&\le \sum_{i=1}^M \sum_{j=1}^n H (\rvy_{k,i}(j) | \rvy_{k,i}^{j-1}, \rvby_{k,i-1}^n, \rvbz^n, \rvm_2) - H (\rvy_{k,i}(j) | \rvy_{k,i}^{j-1}, \rvby_{k,i-1}^n, \rvbz^n, \rvm_1,\rvm_2, \rvx_i(j)) \\
&= \sum_{i=1}^M \sum_{j=1}^n H (\rvy_{k,i}(j) | \rvy_{k,i}^{j-1}, \rvby_{k,i-1}^n, \rvbz^n, \rvm_2)  - H(\rvy_{k,i}(j)|\rvx_i(j), \rvz_i(j)) \label{eq:indep-x}\\
&\le \sum_{i=1}^M \sum_{j=1}^n H (\rvy_{k,i}(j) |  \rvbz^n, \rvm_2)  - H(\rvy_{k,i}(j)|\rvx_i(j), \rvz_i(j)) \label{eq:zero-t1}\\
&= \sum_{i=1}^M \sum_{j=1}^n H (\rvy_{k,i}(j) |  \bbZ_{\bsi}^n, \bbz_{i}^{j-1}, \bbz_{i,j+1}^n, \rvz_i(j), \rvm_2)  - H(\rvy_{k,i}(j)|\rvx_i(j), \rvz_i(j))
\label{eq:degraded-z}\\
&=\sum_{i=1}^M \sum_{j=1}^n H(\rvy_{k,i}(j) | \rvu_i(j),\rvz_i(j))  - H(\rvy_{k,i}(j)|\rvx_i(j), \rvz_i(j), \rvu_i(j))\\
&=\sum_{i=1}^M \sum_{j=1}^n I(\rvx_i(j); \rvy_{k,i}(j)|\rvu_i(j),\rvz_i(j)),
\end{align}}}
where~\eqref{eq:indep-x} follows from the fact that for our channel model $(\rvy_{k,i}(j), \rvz_i(j))$ are independent of all other random variables given $\rvx_i(j)$
whereas~\eqref{eq:degraded-z} follows from the fact that even though $\rvbz^n \subseteq \{\bbZ_{\bsi}^n, \bbz_{i}^{j-1}, \bbz_{i,j+1}^n, \rvz_i(j)\}$ holds,
the additional elements in the latter are only a degraded version of $\rvbz^n$. This establishes~\eqref{eq:R-ub1}.

To complete the converse,  let $\rvq_i$ to be a random variable uniformly distributed over the set $\{1,2,\ldots, n\}$ and furthermore we let $\rvu_i = (\rvu_i(\rvq_i), \rvq_i)$, $\rvx_i = \rvx_i(\rvq_i)$ etc.  Then~\eqref{eq:R-ub1} and~\eqref{eq:R-ub2} can be reduced to
\begin{align}
R_1-2\eps_n &\le \sum_{i=1}^M I(\rvx_i; \rvy_{k,i}|\rvu_i, \rvz_i, \rvq_i) =\sum_{i=1}^M I(\rvx_i; \rvy_{k,i}|\rvu_i, \rvz_i) \\
R_2-2\eps_n &\le \sum_{i=1}^M I(\rvu_i; \rvz_i | \rvy_{k,i}, \rvq_i) \le \sum_{i=1}^M I(\rvu_i; \rvz_i | \rvy_{k,i}).
\end{align}

The upper bound on the cardinality of $\cU_i$ follows by a straightforward application of Caratheodory's theorem and the proof is omitted.

\subsection{Special case of $K=2$ receivers}
For the case when there are $K=2$ receivers, the upper bound can be obtained via an alternative approach which involves first obtaining 
single-letter bounds for a particular genie-aided channel and then combining these bounds in a suitable manner. 

In particular, suppose that we only need to transmit message $\rvm_1$ to  receiver $1$ in group $1$ and that the message $\rvm_2$ only needs to be secure from user $2$ in group $1$. 
Under these relaxed constraints,  it can be shown that any achievable rate pair $(R_1,R_2)$ must satisfy:
\begin{align}
R_1 \le \sum_{i=1}^M I(\rvx_i; \rvy_{1,i}|\rvz_i, \rvu_i), \quad R_2 \le \sum_{i=1}^M I(\rvu_i; \rvz_i | \rvy_{2,i}), \label{eq:pr-1}
\end{align}
for some auxiliary variables $\{\rvu_i\}_{1\le i\le M}$ that satisfy the Markov chain in~\eqref{eq:rl}.
Similarly if we instead consider transmitting message $\rvm_1$ only to user $2$ in group $1$ and require secrecy of $\rvm_2$ only with respect to user $1$ in group $1$, it can be shown that any achievable rate pair $(R_1,R_2)$ must satisfy:
\begin{align}
R_1 \le \sum_{i=1}^M I(\rvx_i; \rvy_{2,i}|\rvz_i, \rvv_i), \quad R_2 \le \sum_{i=1}^M I(\rvv_i; \rvz_i | \rvy_{1,i}).\label{eq:pr-2}
\end{align}
for some auxiliary variables $\{\rvv_i\}_{1\le i \le M}$. Next, we show that on each sub-channel $i$ we can always set $\rvu_i=\rvv_i$ without affecting the upper bound. In particular we consider the following four cases:
\begin{itemize}
\item Group $2$ receiver satisfies $\rvx_i \rightarrow \rvz_i \rightarrow (\rvy_{1,i}, \rvy_{2,i})$: It suffices to take $\rvu_i = \rvv_i = \rvx_i$ in~\eqref{eq:pr-1} and~\eqref{eq:pr-2} as the contribution of this sub-channel in the expressions for $R_1$ is always zero.
\item Group $2$ receiver satisfies $\rvx_i \rightarrow (\rvy_{1,i}, \rvy_{2,i})\rightarrow \rvz_i $: It suffices to take $\rvu_i = \rvv_i = 0$ since the contribution of this sub-channel in the expressions for $R_2$ is zero.

\item Group $2$ receiver satisfies $\rvx_i \rightarrow \rvy_{1,i} \rightarrow \rvz_i \rightarrow \rvy_{2,i}$:~Since the contribution of sub-channel $i$  in the expressions of both $R_1$ and $R_2$ in ~\eqref{eq:pr-2} is zero, we can set $\rvv_i = \rvu_i$ without affecting the upper bound.
\item  Group $2$ receiver satisfies $\rvx_i \rightarrow \rvy_{2,i} \rightarrow \rvz_i \rightarrow \rvy_{1,i}$:~Since the contribution of sub-channel $i$  in the expressions of both $R_1$ and $R_2$ in~\eqref{eq:pr-1} is zero, we can  set $\rvu_i = \rvv_i$ without affecting the upper bound.
\end{itemize}
Thus we  need no more than one non-trivial auxiliary variable on each sub-channel. Setting $\rvv_i=\rvu_i$  in~\eqref{eq:pr-2} we have
\begin{align}
R_1 \le \sum_{i=1}^M I(\rvx_i; \rvy_{2,i}|\rvz_i, \rvu_i), \quad R_2 \le \sum_{i=1}^M I(\rvu_i; \rvz_i | \rvy_{1,i}).\label{eq:pr-3}
\end{align}
The converse follows by combining~\eqref{eq:pr-1} and~\eqref{eq:pr-3}.

Unfortunately when there are more than two receivers in group $1$, we have not been able to obtain the converse directly from such  single-letter expressions.
Therefore our approach in the previous section was to identify a single auxiliary random variable $\rvu_i$ as in~\eqref{eq:u-def} that is simultaneously compatible
with all the $n$-letter upper bound expressions. 

\section{Gaussian Channels}\label{sec:Gau}
In this section we provide a proof for Theorem~\ref{thm:Gauss}. Note that the achievability of the rate pairs $(R_1,R_2)$ constrained by \eqref{eq:Gau1} and \eqref{eq:Gau2} follows that of those constrained by \eqref{eq:R1e} and \eqref{eq:R2e} by setting $\rvx_i =\rvu_i+\rvv_i$, where $\rvu_i$ and $\rvv_i$ are independent $\mathcal{N}(0,P_i-Q_i)$ and $\mathcal{N}(0,Q_i)$ respectively for some $0 \leq Q_i \leq P_i$ and $i=1,\ldots,M$. For the rest of the section, we shall focus on proving the converse result.

Considering proof by contradiction, let us assume that $(R_1^o,R_2^o)$ is an achievable rate pair that lies \emph{outside} the rate region constrained by \eqref{eq:Gau1} and \eqref{eq:Gau2}.  Note that the maximum rate for message $\rvm_1$ is given by the right-hand side of \eqref{eq:Gau1} by setting $Q_i=P_i$ for all $i=1,\ldots,M$ \cite{ashishBC}, and the maximum rate for message $\rvm_2$ is given by the right-hand side of \eqref{eq:Gau2} by setting $Q_i=0$ for all $i=1,\ldots,M$\cite{LiuPrabhakaran:08,gopalaLaiElGamal:06Secrecy}. Thus, without loss of generality we may assume that $R_2^0=R_2^*+\delta$ for some $\delta >0$ where $R_2^*$ is given by
\begin{align}
\max_{(\mathbf{Q},R_2)} & \quad R_2\nonumber\\
\mbox{subject to} & \quad R_1^{o} \leq \sum_{i=1}^{M}A^{(1)}_{k,i}(\mathbf{Q}), \qquad \forall k=1,\ldots,K\label{eq:Cons1}\\
& \quad R_2 \leq \sum_{i=1}^{M}A^{(2)}_{k,i}(\mathbf{Q}),\qquad \forall k=1,\ldots,K\label{eq:Cons2}\\
& \quad Q_i \geq 0, \hspace{17pt} \forall i=1,\ldots,M\label{eq:Cons3}\\
& \quad Q_i \leq P_i, \hspace{13pt} \forall i=1,\ldots,M.\label{eq:Cons4}
\end{align}
For each $k=1,\ldots,K$ and $i=1,\ldots,M$ let $\alpha_k$, $\beta_k$, $M_{1,i}$ and $M_{2,i}$ be the Lagrangians that correspond to the constrains \eqref{eq:Cons1}--\eqref{eq:Cons4} respectively, and let
\begin{align}
L:= R_2+\sum_{k=1}^{K}\alpha_k\left[\sum_{i=1}^{M}A^{(1)}_{k,i}(\mathbf{Q})\!-\!R_1^o\!\right] \!+\! \sum_{k=1}^{K}\beta_k\left[\sum_{i=1}^{M}A^{(2)}_{k,i}(\mathbf{Q})\!-\!R_2\right]
 +\sum_{i=1}^{M}M_{1,i}Q_i+\sum_{i=1}^{M}M_{2,i}(P_i-Q_i).
\end{align}
It is straightforward to verify that the above optimization program that determines $R_2^*$ is a convex program. Therefore, taking partial derivatives of $L$ over $Q_i$, $i=1\ldots,M$ and $R_2$ respectively gives the following set of Karush-Kuhn-Tucker (KKT) conditions, which must be satisfied by any \emph{optimal} solution $(\mathbf{Q}^*,R_2^*)$:
\begin{align}
\sum_{k \in \mathcal{Y}_i}\alpha_k(Q_i^*+\sigma_{k,i}^2)^{-1}+\sum_{k \in \mathcal{Z}_i}\beta_k(Q_i^*+\sigma_{k,i}^2)^{-1}+M_{1,i} &= \left(\sum_{k \in \mathcal{Y}_i}\alpha_k+\sum_{k \in \mathcal{Z}_i}\beta_k\right)(Q_i^*+\delta_i^2)^{-1}+M_{2,i}\label{eq:KKT1}\\
\sum_{k=1}^{K}\beta_k & =1\label{eq:KKT2}\\
\alpha_k\left[\sum_{i=1}^MA^{(1)}_{k,i}(\mathbf{Q}^*)-R_1^o\right] &=0,\; \forall k=1,\ldots,K\label{eq:KKT3}\\
\beta_k\left[\sum_{i=1}^{M}A^{(2)}_{k,i}(\mathbf{Q}^*)-R_2^*\right] &=0,\; \forall k=1,\ldots,K\label{eq:KKT4}\\
M_{1,i}Q_i^* &=0, \; \forall i=1,\ldots,M\label{eq:KKT5}\\
M_{2,i}(P_i-Q_i^*) &=0, \; \forall i=1,\ldots,M\label{eq:KKT6}\\
\alpha_k, \beta_k & \geq 0,\; \forall k=1,\ldots,K\label{eq:KKT7}\\
M_{1,i}, M_{2,i} & \geq 0,\; \forall i=1,\ldots,M\label{eq:KKT8}
\end{align} 
where \begin{align}
\mathcal{Y}_i :=\{k: \sigma_{k,i}^2 < \delta_i^2\} \quad \mbox{and} \quad \mathcal{Z}_i :=\{k:\sigma_{k,i}^2 > \delta_i^2\}.
\end{align}
Note that $\delta>0$, so we have
\begin{align}
\left(\sum_{k=1}^{K}\alpha_k\right)R_1^o+R_2^o & > \left(\sum_{k=1}^{K}\alpha_k\right)R_1^o+R_2^*\\
&= \sum_{k=1}^{K}\left(\alpha_kR_1^o+\beta_kR_2^*\right)\label{eq:Contr100}\\
&= \sum_{k=1}^{K}\left[\alpha_k\sum_{i=1}^{M}A^{(1)}_{k,i}(\mathbf{Q}^*)+\beta_k\sum_{i=1}^{M}A^{(2)}_{k,i}(\mathbf{Q}^*)\right]\label{eq:Contr200}\\
&= \sum_{i=1}^{M}\sum_{k=1}^{K}\left[\alpha_kA^{(1)}_{k,i}(\mathbf{Q}^*)+\beta_kA^{(2)}_{k,i}(\mathbf{Q}^*)\right],\label{eq:Contr1}
\end{align}
where \eqref{eq:Contr100} follows from the KKT condition \eqref{eq:KKT2}, and \eqref{eq:Contr200} follows from the KKT conditions \eqref{eq:KKT3} and \eqref{eq:KKT4}.

Next, we shall show that by assumption $(R_1^o,R_2^o)$ is achievable, so we have
\begin{align}
\left(\sum_{k=1}^{K}\alpha_k\right)R_1^o+R_2^o &\leq \sum_{i=1}^{M}\sum_{k=1}^{K}\left[\alpha_kA^{(1)}_{k,i}(\mathbf{Q}^*)+\beta_kA^{(2)}_{k,i}(\mathbf{Q}^*)\right]\label{eq:Contr2}
\end{align}
which is an apparent contradiction to \eqref{eq:Contr1} and hence will help to complete the proof of the theorem.

To prove \eqref{eq:Contr2}, let us apply the converse part of Theorem~\ref{thm:DMC} on $(R_1^o,R_2^o)$ and write
\begin{align}
\left(\sum_{k=1}^{K}\alpha_k\right)R_1^o+R_2^o & \leq \left(\sum_{k=1}^{K}\alpha_k\right)\min_{1\le k \le K}\left\{\sum_{i=1}^M I(\rvx_i; \rvy_{k,i}| \rvu_i, \rvz_i)\right\}+ \min_{1\le k \le K}\left\{\sum_{i=1}^M I(\rvu_i; \rvz_i | \rvy_{k,i})\right\}\\
&\leq \sum_{k=1}^{K}\left[\alpha_k\sum_{i=1}^{M}I(\rvx_i; \rvy_{k,i}| \rvu_i, \rvz_i)\right]+ \sum_{k=1}^{K}\left[\beta_k\sum_{i=1}^{M} I(\rvu_i; \rvz_i | \rvy_{k,i})\right]\label{eq:Contr2.5}\\
& = \sum_{i=1}^{M}\sum_{k=1}^{K}\left[\alpha_kI(\rvx_i; \rvy_{k,i}| \rvu_i, \rvz_i)+\beta_k I(\rvu_i; \rvz_i | \rvy_{k,i})\right],\label{eq:Contr3}
\end{align}
where \eqref{eq:Contr2.5} follows from the well-known fact that minimum is no more than any weighted mean. By the degradedness assumption \eqref{eq:degOrd}, we have
\begin{align}
I(\rvx_i; \rvy_{k,i}| \rvu_i, \rvz_i) &=I(\rvx_i; \rvy_{k,i}| \rvu_i)-I(\rvx_i; \rvz_i| \rvu_i)\\
&=h(\rvy_{k,i}|\rvu_i)-h(\rvz_i|\rvu_i)-h(\rvn_{k,i})+h(\rvw_i)\\
&=h(\rvy_{k,i}|\rvu_i)-h(\rvz_i|\rvu_i)-\frac{1}{2}\log\left(\frac{\sigma_{k,i}^2}{\delta_i^2}\right)
\end{align} 
for any $k \in \mathcal{Y}_i$ and $I(\rvx_i; \rvy_{k,i}| \rvu_i, \rvz_i)=0$ for any $k \notin \mathcal{Y}_i$. Similarly, 
\begin{align}
I(\rvu_i; \rvz_i | \rvy_{k,i})& =I(\rvu_i; \rvz_i)-(\rvu_i; \rvy_{k,i})\\
&=h(\rvz_i)-h(\rvy_{k,i})-h(\rvz_i|\rvu_i)+h(\rvy_{k,i}|\rvu_i)\\
&\leq \frac{1}{2}\log\left(\frac{P_i+\delta_i^2}{P_i+\sigma_{k,i}^2}\right)-h(\rvz_i|\rvu_i)+h(\rvy_{k,i}|\rvu_i)\label{eq:WANL}
\end{align} 
for any $k \in \mathcal{Z}_i$, where \eqref{eq:WANL} follows from the worst additive noise Lemma \cite{diggaviCover01}, and 
$I(\rvu_i; \rvz_i | \rvy_{k,i})=0$ for any $k \notin \mathcal{Z}_i$. Thus, for each $i=1,\ldots,M$ we have
\begin{align}
\sum_{k=1}^{K} &\left[\alpha_kI(\rvx_i; \rvy_{k,i}| \rvu_i, \rvz_i)+\beta_kI(\rvu_i; \rvz_i | \rvy_{k,i})\right]\nonumber\\
&\leq \sum_{k \in \mathcal{Y}_i}\alpha_k\left[h(\rvy_{k,i}|\rvu_i)-h(\rvz_{i}|\rvu_i)-\frac{1}{2}\log\left(\frac{\sigma_{k,i}^2}{\delta_i^2}\right)\right]+\nonumber\\
& \hspace{20pt}\sum_{k \in \mathcal{Z}_i}\beta_k\left[\frac{1}{2}\log\left(\frac{P_i+\delta_i^2}{P_i+\sigma_{k,i}^2}\right)-h(\rvz_i|\rvu_i)+h(\rvy_{k,i}|\rvu_i)\right]\\
&= \sum_{k \in \mathcal{Y}_i}\alpha_kh(\rvy_{k,i}|\rvu_i)+\sum_{k \in \mathcal{Z}_i}\beta_kh(\rvy_{k,i}|\rvu_i)-\left(\sum_{k \in \mathcal{Y}_i}\alpha_k+\sum_{k \in \mathcal{Z}_i}\beta_k\right)h(\rvz_{i}|\rvu_i)-\nonumber\\
& \hspace{20pt} \sum_{k \in \mathcal{Y}_i}\frac{\alpha_k}{2}\log\left(\frac{\sigma_{k,i}^2}{\delta_i^2}\right)+\sum_{k \in \mathcal{Z}_i}\frac{\beta_k}{2}\log\left(\frac{P_i+\delta_i^2}{P_i+\sigma_{k,i}^2}\right).
\label{eq:Contr300}
\end{align}
We have the following lemma, which is the scalar version of the extremal inequality established in \cite[Theorem~2]{liu-vector}.
\begin{lemma}
For any real scalars $\alpha_k$, $\beta_k$, $Q_i^*$, $M_{1,i}$ and $M_{2,i}$ that satisfy KKT conditions \eqref{eq:KKT1} and \eqref{eq:KKT5}--\eqref{eq:KKT8}, we have
\begin{align}
\sum_{k \in \mathcal{Y}_i}&\alpha_kh(\rvy_{k,i}|\rvu_i)+\sum_{k \in \mathcal{Z}_i}\beta_kh(\rvy_{k,i}|\rvu_i)- \left(\sum_{k \in \mathcal{Y}_i}\alpha_k+\sum_{k \in \mathcal{Z}_i}\beta_k\right)h(\rvz_{i}|\rvu_i)  \nonumber\\
& \leq\sum_{k \in \mathcal{Y}_i}\frac{\alpha_k}{2}\log(Q_i^*+\sigma_{k,i}^2)+\sum_{k \in \mathcal{Z}_i}\frac{\beta_k}{2}\log(Q_i^*+\sigma_{k,i}^2)-
 \frac{\sum_{k \in \mathcal{Y}_i}\alpha_k+\sum_{k \in \mathcal{Z}_i}\beta_k}{2}\log(Q_i^*+\delta_{i}^2)
\label{eq:ExtIneq}
\end{align}
for any $(\rvu_i,\rvx_i)$ that is independent of the additive Gaussian noise $(\rvn_{1,i},\ldots,\rvn_{K,i},\rvw_i)$ and such that $E[\rvx_i^2]\leq P_i$.
\hfill$\Box$
\end{lemma}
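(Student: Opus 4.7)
The inequality is the scalar specialization of the vector extremal inequality in~\cite[Theorem~2]{liu-vector}, and the plan is to adapt that proof to the single-input setting at hand. I would first exhibit a jointly Gaussian $(\rvu_i,\rvx_i)$ that attains equality, and then use a Lagrangian argument keyed to the KKT identity~\eqref{eq:KKT1}, combined with the conditional entropy power inequality (EPI), to show that no admissible pair can exceed the claimed upper bound.

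For Gaussian achievability, take $\rvu_i\sim\mathcal{N}(0,P_i-Q_i^*)$ independent of $\rvv_i\sim\mathcal{N}(0,Q_i^*)$ and set $\rvx_i=\rvu_i+\rvv_i$, so that $E[\rvx_i^2]=P_i$. A direct computation gives $h(\rvy_{k,i}|\rvu_i)=\frac{1}{2}\log(2\pi e(Q_i^*+\sigma_{k,i}^2))$ and $h(\rvz_i|\rvu_i)=\frac{1}{2}\log(2\pi e(Q_i^*+\delta_i^2))$. Since the aggregate weight on the $\rvy$-entropies equals the weight on the $\rvz$-entropy (both equal to $C:=\sum_{k\in\mathcal{Y}_i}\alpha_k+\sum_{k\in\mathcal{Z}_i}\beta_k$), the $\log(2\pi e)$ contributions cancel and the left-hand side of the inequality evaluates exactly to its right-hand side.

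For the upper bound I would introduce the Lagrange multiplier $\mu^*:=(M_{2,i}-M_{1,i})/2$ suggested by~\eqref{eq:KKT1} and analyze the functional $L(\rvu_i,\rvx_i)$ obtained by subtracting $\mu^*E[\rvx_i^2]$ from the left-hand side of the inequality. Using the decompositions $\rvz_i=\rvy_{k,i}+\tilde{\rvn}_k$ with $\tilde{\rvn}_k\sim\mathcal{N}(0,\delta_i^2-\sigma_{k,i}^2)$ for $k\in\mathcal{Y}_i$ and $\rvy_{k,i}=\rvz_i+\tilde{\rvw}_k$ with $\tilde{\rvw}_k\sim\mathcal{N}(0,\sigma_{k,i}^2-\delta_i^2)$ for $k\in\mathcal{Z}_i$, the conditional EPI yields a pairwise bound coupling $h(\rvy_{k,i}|\rvu_i)$ and $h(\rvz_i|\rvu_i)$ in each regime. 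The KKT identity~\eqref{eq:KKT1} then serves as a compatibility relation, prescribing the exact partition of the total weight $C$ on $h(\rvz_i|\rvu_i)$ between the two regimes so that the pairwise EPI bounds can be summed with matching coefficients and collapse to the single scalar upper bound stated in the lemma. Complementary slackness via~\eqref{eq:KKT5}--\eqref{eq:KKT6} handles the boundary cases $Q_i^*=0$ and $Q_i^*=P_i$.

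The main obstacle is that $\mathcal{Y}_i$ and $\mathcal{Z}_i$ induce degradation relations pointing in opposite directions, so neither a single application of EPI nor of the worst-additive-noise lemma applies uniformly across both regimes. The KKT condition~\eqref{eq:KKT1} is precisely the splitting identity that reconciles them; indeed it is the only place the Lagrange multipliers $\alpha_k,\beta_k,M_{1,i},M_{2,i}$ and the Gaussian covariance $Q_i^*$ enter in a coupled way, which is why the lemma is stated under those conditions. A possibly cleaner route in the scalar case is an I-MMSE argument: verify that the directional derivative of the left-hand side along the addition of infinitesimal Gaussian noise to $\rvx_i$ matches that of the right-hand side at the Gaussian optimum, and then invoke the concavity of the underlying functionals to propagate from stationarity to the global upper bound.
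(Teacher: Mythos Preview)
The paper does not give a detailed proof of this lemma; it simply identifies the inequality as the scalar case of \cite[Theorem~2]{liu-vector} and remarks that in this case Costa's entropy-power inequality \cite{costa} (i.e., concavity of the entropy power along Gaussian perturbation) suffices in place of its vector generalization. Your proposal is in the same spirit, and the Gaussian equality check and the recognition that the KKT identity \eqref{eq:KKT1} is the splitting condition are both correct and to the point.

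There is, however, a genuine gap in your main line of argument. For $k\in\mathcal{Y}_i$ you have $\rvz_i=\rvy_{k,i}+\tilde{\rvn}_k$, and the (standard, Shannon) conditional EPI indeed yields an \emph{upper} bound on $h(\rvy_{k,i}\mid\rvu_i)$ in terms of $h(\rvz_i\mid\rvu_i)$. But for $k\in\mathcal{Z}_i$ the decomposition is $\rvy_{k,i}=\rvz_i+\tilde{\rvw}_k$, and the same conditional EPI gives
\[
e^{2h(\rvy_{k,i}\mid\rvu_i)}\;\ge\;e^{2h(\rvz_i\mid\rvu_i)}+2\pi e(\sigma_{k,i}^2-\delta_i^2),
\]
which is a \emph{lower} bound on $h(\rvy_{k,i}\mid\rvu_i)$---the wrong direction, since this term carries the nonnegative weight $\beta_k$ on the left-hand side of \eqref{eq:ExtIneq}. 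No amount of rebalancing via \eqref{eq:KKT1} repairs this sign problem; you need an inequality that upper-bounds $h(\rvy_{k,i}\mid\rvu_i)$ for $k\in\mathcal{Z}_i$ as well.

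This is exactly what Costa's EPI provides and why the paper singles it out. Concavity of $t\mapsto e^{2h(\rvx_i+\sqrt{t}\,\rvn\mid\rvu_i)}$ gives, for any $t_2>t_1\ge 0$, an upper bound on the entropy at the larger noise level in terms of the entropy and its derivative at the smaller one; combined with the stationarity encoded in \eqref{eq:KKT1} (which pins down the derivative at $Q_i^*$) and the complementary slackness \eqref{eq:KKT5}--\eqref{eq:KKT6}, this yields \eqref{eq:ExtIneq}. Your closing I-MMSE/concavity sketch is precisely this argument in different language and is the part of the proposal that would actually go through; the ``pairwise conditional EPI'' route, as stated, does not.
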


We note here that the extremal inequality in \cite[Theorem~2]{liu-vector} was established using a \emph{vector} generalization of Costa's entropy-power inequality. The scalar version that we used here, however, can be directly established using the \emph{original} Costa's entropy-power inequality \cite{costa}. Substituting \eqref{eq:ExtIneq} into \eqref{eq:Contr300} gives 
\begin{align}
\sum_{k=1}^{K} &\left[\alpha_kI(\rvx_i; \rvy_{k,i}| \rvu_i, \rvz_i)+\beta_kI(\rvu_i; \rvz_i | \rvy_{k,i})\right]\nonumber\\
&\leq \sum_{k \in \mathcal{Y}_i}\frac{\alpha_k}{2}\log(Q_i^*+\sigma_{k,i}^2)+\sum_{k \in \mathcal{Z}_i}\frac{\beta_k}{2}\log(Q_i^*+\sigma_{k,i}^2)- \frac{\sum_{k \in \mathcal{Y}_i}\alpha_k+\sum_{k \in \mathcal{Z}_i}\beta_k}{2}\log(Q_i^*+\delta_{i}^2)-\nonumber\\
& \hspace{20pt} \sum_{k \in \mathcal{Y}_i}\frac{\alpha_k}{2}\log\left(\frac{\sigma_{k,i}^2}{\delta_i^2}\right)+\sum_{k \in \mathcal{Z}_i}\frac{\beta_k}{2}\log\left(\frac{P_i+\delta_i^2}{P_i+\sigma_{k,i}^2}\right)\\
&= \sum_{k \in \mathcal{Y}_i}\alpha_k\left[\frac{1}{2}\log\left(\frac{Q^*_i+\sigma_{k,i}^2}{\sigma_{k,i}^2}\right)-\frac{1}{2}\log\left(\frac{Q^*_i+\delta_{i}^2}{\delta_{i}^2}\right)\right]+\notag\\
&\hspace{20pt} \sum_{k \in \mathcal{Z}_i}\beta_k\left[\frac{1}{2}\log\left(\frac{P_i+\delta_i^2}{Q^*_i+\delta_{i}^2}\right)-\frac{1}{2}\log\left(\frac{P_i+\sigma_{k,i}^2}{Q^*_i+\sigma_{k,i}^2}\right)\right]\\
& =\sum_{k=1}^{K}\left[\alpha_kA^{(1)}_{k,i}(\mathbf{Q}^*)+\beta_kA^{(2)}_{k,i}(\mathbf{Q}^*)\right].\label{eq:Contr400}
\end{align}
Further substituting \eqref{eq:Contr400} into \eqref{eq:Contr3} completes the proof of \eqref{eq:Contr2}. We have thus completed the proof of Theorem~\ref{thm:Gauss}.

\section{Fading Channels}
\label{sec:fading}

To establish the connection to fading channels, first observe that Theorem~\ref{thm:Gauss} and Corollary~\ref{corol:Gauss} can be extended in the following way. Consider the following scalar Gaussian broadcast channel with $K+1$ users:
\begin{align}
\rvy_k(t) &= \rvx(t) + \rvn_k(t)\\
\rvz(t) &= \rvx(t) + \rvw(t), \quad t=1,\ldots,T.\label{eq:Gau100}
\end{align}
At each time sample $t$, the additive noise $(\rvn_1(t),\ldots,\rvn_K(t),\rvw(t))$ are independent zero-mean Gaussian with the variances $(\sigma_1^2,\ldots,\sigma_K^2,\delta^2)$ selected at random as $(\sigma_{1,i}^2,\ldots,\sigma_{K,i}^2,\delta_i^2)$ with probability $p_i$, $i=1,\ldots,M$. Both the selection of the noise variances and the realization of the additive noise are assumed to be independent across the time index $t$ and revealed to all the terminals. We are interested in the ergodic scenario where the duration $T$ of communication can be arbitrarily large. The following extension of Thoerem~\ref{thm:Gauss}
readily follows and its proof will be omitted.
\begin{corol}
For the scalar Gaussian broadcast channel considered above, the capacity region consists of all rate pairs $(R_1, R_2)$ that satisfy
\begin{align}
R_1 \!&\leq\! \min_{1 \leq k \leq K} \sum_{i=1}^{M}p_i\left[\frac{1}{2}\log\left(\frac{Q_i+\sigma_{k,i}^2}{\sigma_{k,i}^2}\right)-
\frac{1}{2}\log\left(\frac{Q_i+\delta_{i}^2}{\delta_{i}^2}\right)\right]^+\label{eq:prob-chan-1}\\
R_2\! &\leq\! \min_{1 \leq k \leq K} \sum_{i=1}^{M}p_i\left[\frac{1}{2}\log\left(\frac{P_i+\delta_{i}^2}{Q_i+\delta_{i}^2}\right)-
\frac{1}{2}\log\left(\frac{P_i+\sigma_{k,i}^2}{Q_i+\sigma_{k,i}^2}\right)\right]^+\label{eq:prob-chan-2}
\end{align}
for some $0 \leq Q_i \leq P_i$ and $i=1,\ldots,M$. \hfill$\Box$
\label{corol:prob-chan}
\end{corol}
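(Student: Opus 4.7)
The strategy is to reduce the statement to the parallel Gaussian broadcast channel setting of Theorem~\ref{thm:Gauss}. Since the noise configuration at time $t$ is drawn i.i.d.\ across $t$ from the distribution $\{p_i\}$ and is revealed to every terminal, I would partition the $T$ time samples according to the realized noise state. Let $T_i$ denote the (random) number of time samples falling in state $i$; by the weak law of large numbers $T_i/T \to p_i$ in probability. Conditional on the state sequence, the original channel decomposes into $M$ independent scalar Gaussian broadcast sub-channels of respective lengths $T_i$, where sub-channel $i$ has noise variances $(\sigma_{1,i}^2,\ldots,\sigma_{K,i}^2,\delta_i^2)$. This is precisely an instance of the parallel Gaussian broadcast channel addressed by Theorem~\ref{thm:Gauss}, with per sub-channel power constraint $P_i$.

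\textbf{Achievability.} For each state $i$ I would deploy the Gaussian superposition scheme implicit in Theorem~\ref{thm:Gauss}: a cloud center drawn from $\mathcal{N}(0, P_i - Q_i)$ carrying $\rvm_2$, an independent satellite drawn from $\mathcal{N}(0, Q_i)$ carrying $\rvm_1$, with transmitted symbol equal to their sum. The rates achievable per time sample of sub-channel $i$ are precisely the bracketed $[\cdot]^+$ expressions appearing in \eqref{eq:prob-chan-1}--\eqref{eq:prob-chan-2}. Summing the contributions across sub-channels and normalizing by $T$, the overall rates become $\sum_i (T_i/T)\,[\cdot]^+$, which converges almost surely to $\sum_i p_i\,[\cdot]^+$, matching the claimed region.

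\textbf{Converse.} In the reverse direction, any length-$T$ code for the state-dependent channel also serves as a code for the associated parallel Gaussian broadcast channel once one conditions on the state sequence. The converse half of Theorem~\ref{thm:Gauss} then supplies per sub-channel single-letter upper bounds; weighting these by $T_i/T$ and passing to the limit $T \to \infty$ recovers the right-hand sides of \eqref{eq:prob-chan-1}--\eqref{eq:prob-chan-2}. Atypical state sequences, whose probability vanishes, contribute negligibly by a standard Fano/typicality argument.

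\textbf{Main obstacle.} The only point requiring genuine care is the random length $T_i$ of each sub-channel and the associated per sub-channel power: on any sample path both $T_i$ and the empirical second moment of the input restricted to state $i$ fluctuate, so one must argue via ergodicity (and, if needed, a vanishing power back-off on each sub-channel) that the per sub-channel average-power constraint $P_i$ can be enforced with no asymptotic loss in rate. Once this concentration step is dispatched, the corollary follows as an immediate consequence of Theorem~\ref{thm:Gauss}, which is presumably why the authors elect to omit the proof.
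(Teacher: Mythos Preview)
Your proposal is correct and matches the paper's implicit argument: the authors state that the corollary ``readily follows'' from Theorem~\ref{thm:Gauss} and omit the proof, and your reduction via partitioning the time axis by noise state and invoking the law of large numbers is exactly the standard way to effect that reduction. The concentration issue you flag for the random sub-block lengths and empirical powers is the only non-routine step, and your treatment of it is adequate.
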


Clearly if the fading coefficients in~\eqref{eq:fading} are all discrete-valued, then the result in Theorem~\ref{thm:fading} follows immediately from Corollary~\ref{corol:prob-chan}. When the fading coefficients are continuous valued, we can generalize Theorem~\ref{thm:Gauss} by suitably quantizing the channel gains.

First  without loss of generality, we  assume that each fading coefficient is real-valued, since each receiver can cancel out the phase
of the fading gain through a suitable  multiplication at the receiver.  Consider a discrete set
$$\cA := \{A_1, A_2,\ldots, A_N, A_{N+1}\}$$
where $A_i \le A_{i+1},$ $A_1:=0$, $A_N := J$ and $A_{N+1}:= \infty$ holds. 

Given a set of channel gains $(\rvh_1(i),\ldots, \rvh_K(i), \rvg(i))$ in coherence block $i$, we discretize them to one of $(N+1)^{K+1}$ states as described below. 
\begin{itemize}
\item Encoding message $\rvm_1$: Suppose that the channel gain of receiver $k$ satisfies $A_q \le \rvh_k(i) \le A_{q+1}$, then we assume that the  channel gain equals $\rvs_{i,k} =A_q$. If the channel gain of the group $2$ user satisfies $A_q \le \rvg(i) \le A_{q+1}$ 
then we assume that its channel gain equals $\rvsu_{i,K+1} = A_{q+1}$. 
\item Encoding message $\rvm_2$: Suppose that the channel gain of the group $2$ receiver satisfies $A_q \le \rvg(i) \le A_{q+1}$, then we assume that the channel gain equals $\rvs_{K+1} = A_q$. If the channel gain of a group $1$ receiver satisfies $A_q \le \rvh_k(i) \le A_{q+1}$ then we assume it equals $\rvsu_k = A_{q+1}$.
\end{itemize}

Thus the channel gains in coherence block are mapped to one of $L = (N+1)^{K+1}$ states $\{\rvbs_j\}_{j=1}^L$. We denote the channel gains of the associated receivers in state $\rvbs_j$ as $(\rvs_{j,1},\ldots, \rvs_{j,K}, \rvs_{j,K+1})$ and the channel gains of the associated eavesdroppers as $(\rvsu_{j,1},\ldots, \rvsu_{j,K+1})$. Note that in our notation, the $K$ receivers in group $1$ are labeled $\{1,\ldots, K\}$ while the group $2$ receiver is labeled $\{K+1\}$.

With the above quantization procedure it suffices to consider a coding scheme associated for  $L =(N+1)^{K+1}$
parallel channels, where each parallel channel corresponds to one state realization $\rvbs_j$.
Using Corollary~\ref{corol:prob-chan} the following rate pair $(R_1, R_2)$ is achievable:
\begin{align}
R_1 &\le \min_{1\le k \le K} \sum_{j=1}^L \Pr(\rvbs_j)A^{(1)}_{j,k}(\rvbs_j) \label{eq:fading-R1-achiev}\\
R_2 &\le \min_{1\le k \le K} \sum_{j=1}^L \Pr(\rvbs_j) A^{(2)}_{j,k}(\rvbs_j),\label{eq:fading-R2-achiev}
\end{align}
where 
{\allowdisplaybreaks{\begin{align}
A^{(1)}_{j,k}(\rvbs_j) &:= \bigg\{ \log \frac{1 + Q(\rvbs_j)|\rvs_{j,k}|^2}{1+ Q(\rvbs_j)|\rvsu_{j,K+1}|^2}\bigg\}^+ \label{eq:A1-def}\\
A^{(2)}_{j,k}(\rvbs_j) &:=\bigg\{ \!\log \frac{1 + P(\rvbs_j)|\rvs_{j,K+1}|^2}{1+ Q(\rvbs_j)|\rvs_{j,K+1}|^2} \!-\!
\log \frac{1 + P(\rvbs_j)|\rvsu_{j,k}|^2}{1+ Q(\rvbs_j)|\rvsu_{j,k}|^2}  \!\bigg\}^+.
\end{align}}}

For any $J$, taking the limit $N \rightarrow \infty$ we have that
\begin{align}
\sum_{j=1}^L \Pr(\rvbs_j)A^{(1)}_{j,k}(\bh,g) &\rightarrow \oint_0^J \int_0^J A^{(1)}_{k}(\bh, g) dF(g) dF(\bh) \\
&=\oint_0^J \int_0^\infty A^{(1)}_{k}(\bh,g)  dF(g) dF(\bh) \label{eq:int-limit}
\end{align}
where 
$$A^{(1)}_{k}(\bh,g) = \bigg\{\log\frac{1 + Q(\bh,g)|h_k|^2}{1+Q(\bh,g)|g|^2}\bigg\}^+,$$
and~\eqref{eq:int-limit} follows from the fact that $A_k^{(1)}(\cdot)=0$ for $\rvsu_{K+1} > J$. Finally, by taking $J$ arbitrarily large, the right hand side in~\eqref{eq:fading-R1-achiev} approaches
\begin{align}
R_1 \le \min_{1\le k \le K}  \oint_0^\infty \int_0^\infty A^{(1)}_{k}(\bh,g)  dF(g)  dF(\bh) 
\end{align}
as required. In a similar fashion the achievability of $R_2$ can be established.

The converse follows by noticing that if the channel gains are revealed non-causally to the terminals, the system reduces to a parallel channel model and the result in Theorem~\ref{thm:Gauss} immediately applies.

\subsection{Numerical Results}
\begin{figure}
\begin{center}
\includegraphics[scale=0.3]{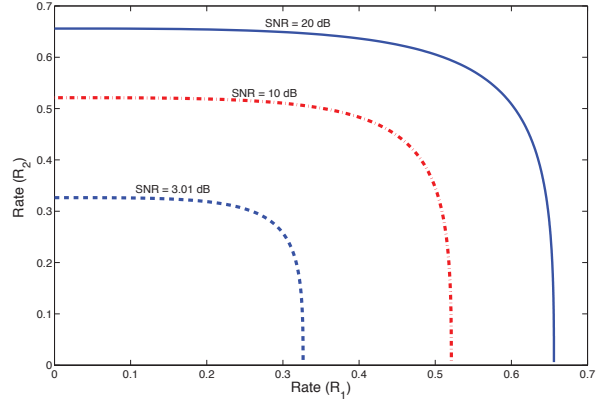}
\caption{Achievable rates (nats/symbol)  for the two groups at different SNR values. The x-axis shows the rate $R_1$ for group $1$ whereas the y-axis shows the rate $R_2$ for group $2$.}
\label{fig:fading}
\end{center}
\end{figure}

In order to evaluate the achievable rate region, we assume that the fading gains are all  sampled from $\CN(0,1)$. Furthermore instead of finding the optimal power allocation we assume a potentially sub-optimal power allocation:
\begin{align}
Q(\bh, g) = \begin{cases}
P, & |g|^2 \ge \theta\\
0, & |g|^2 < \theta.
\end{cases}\label{eq:theta-select}
\end{align}
where $\theta$ is a certain fixed parameter and assume that $P(\bh,g) = P$ for all values of $(\bh,g)$. Notice that our power allocation does not depend on the channel gains of the receivers in group $1$.This is a reasonable simplification when $K$ is large and the channel gains $(\rvh_1,\ldots, \rvh_K)$ are identically distributed. The achievable rate expressions~\eqref{eq:R1-fading}
and~\eqref{eq:R2-fading} reduce to:
\begin{align}
R_1 &\le \Pr(|\rvg|^2 \le \theta)E\left[ \bigg\{\log\frac{1+P|\rvh|^2}{1+P|\rvg|^2}\bigg\}^+ \bigg| |\rvg|^2 \le \theta \right] \label{eq:R-achiev-1}\\
R_2 &\le \Pr(|\rvg|^2 \ge \theta)  E\bigg[ \bigg\{\log\frac{1+P|\rvg|^2}{1+P|\rvh|^2}\bigg\}^+ \bigg| |\rvg|^2 \ge \theta \bigg]\label{eq:R-achiev-2}
\end{align}

In Fig.~\ref{fig:fading}, we plot the achievable rates for $P\in \{2,10,100\}$.   We make the following observations:
\begin{itemize}
\item The corner points for $R_1$ and $R_2$ are obtained by setting $\theta = \infty$ and $\theta=0$ respectively.  By symmetry of the rate expressions in~\eqref{eq:R-achiev-1} and~\eqref{eq:R-achiev-2}, it is clear that both the corner points  evaluate to the same numerical constant. 
\item As we approach the corner point $(0,R_2)$ the boundary of the capacity region is nearly flat. Any coherence block, where $|\rvg(i)| \le \min_{1\le k \le K} |\rvh_k(i)|$ is clearly not useful to the group $2$ receiver. By transmitting $\rvm_1$ in these slots one can increase the rate $R_1$ without decreasing $R_2$. 
\item As we approach the corner point $(R_1, 0),$ the boundary of the capacity region is nearly vertical. The argument is very similar to the previous case. In any period where  $|\rvg(i)| \ge \max_{1\le k \le K} |\rvh_k(i)|$ one cannot transmit to group $1$. By transmitting $\rvm_2$ in these slots we increase $R_2$ without decreasing $R_1$.
\item We observe that a natural alternative to the proposed scheme is time-sharing. The rate achieved by such a scheme corresponds to a straight line connecting the corner points. The rate-loss associated with such a scheme is significant compared to the proposed scheme. 
\end{itemize}

\section{Conclusions}
We establish the optimality of a superposition construction for private broadcasting of two messages to two groups of receivers over independent parallel channels, when there are an arbitrary number of receivers in group $1$ but there is only one receiver in group $2$. We observe that in the optimal construction the codewords of group $2$ must constitute the ``cloud centers" whereas the codewords of group $1$ must constitute the ``satellite codewords".  For the case of Gaussian sub-channels the optimality of Gaussian codebooks is established. This is accomplished by obtaining a Lagrangian dual for each point on the boundary of the capacity region and then using an extremal inequality to show that the resulting expression is maximized using a Gaussian input distribution. An extension to block-fading channels is also discussed. Numerical results for Rayleigh-fading channels indicate that the proposed scheme can provide significant performance gains over naive time-sharing techniques.

\bibliographystyle{IEEEtran}
\bibliography{sm}

\end{document}